\newcommand{\e}{\mathrm{e}}
\newcommand{\eqdef}{\stackrel{\textrm{\tiny def}}{=}}
\newcommand*{\point}{\makebox[1ex]{\textbf{$\cdot$}}}%
\newtheorem{thm}{Theorem}
\begin{document}
%
\title{Downlink Single-Snapshot Localization and Mapping with a Single-Antenna Receiver}
%
%
%

\author{\IEEEauthorblockN{Alessio Fascista, \IEEEmembership{Member, IEEE}, Angelo Coluccia, \IEEEmembership{Senior Member, IEEE}, Henk Wymeersch, \IEEEmembership{Senior Member, IEEE}, and Gonzalo Seco-Granados, \IEEEmembership{Senior Member, IEEE}} 
\thanks{This work was supported in part by the Swedish Research Council (VR) under project No.~2018-03701.}
\thanks{A. Fascista and A. Coluccia are with the Department of Innovation Engineering, Universit\`a del Salento, Via Monteroni, 73100 Lecce, Italy (e-mail: alessio.fascista@unisalento.it; angelo.coluccia@unisalento.it).}
\thanks{H. Wymeersch is with the Department of Electrical Engineering, Chalmers University of Technology, 412 96 Gothenburg, Sweden (e-mail: henkw@chalmers.se).}
\thanks{G. Seco-Granados is with the Department of Telecommunications and Systems Engineering, Universitat Aut\`onoma de Barcelona, 08193 Barcelona, Spain (e-mail: gonzalo.seco@uab.cat).}
}

\maketitle

\begin{abstract}
5G mmWave MIMO systems enable accurate estimation of the user position and mapping of the radio environment using a single snapshot when both the base station (BS) and user are equipped with large antenna arrays. However, massive arrays are initially expected only at the BS side, likely leaving users with one or very few antennas. In this paper, we propose a novel method for single-snapshot localization and mapping in the more challenging case of a user equipped with a single-antenna receiver. The joint maximum likelihood (ML) estimation problem is formulated and its solution formally derived. To avoid the burden of a full-dimensional search over the space of the unknown parameters, we present a novel practical approach that exploits the sparsity of mmWave channels to compute an approximate joint ML estimate. A thorough analysis, including the derivation of the Cram\'er-Rao lower bounds, reveals that  accurate localization and mapping can be achieved also in a MISO setup even when the direct line-of-sight path between the BS and the user is severely attenuated.
\end{abstract}

\begin{IEEEkeywords}
mmWave, localization, mapping, MIMO, multiple-input single-output (MISO), 5G cellular networks, AOD, SLAM
\end{IEEEkeywords}

%
\IEEEpeerreviewmaketitle

\section{Introduction}
%
%
%
%
\IEEEPARstart{T}{he} advent of fifth-generation (5G) mobile cellular communications is paving the way for a technological revolution \cite{rappaport_5G,Witrisal}. Millimeter wave (mmWave) signals and massive multiple-input multiple-output (MIMO) technologies are regarded as key pillars of emerging 5G systems, thanks to the expected high data rates and spectral efficiency \cite{Swindle,Heath,Tufve}. Large bandwidths and massive antenna arrays make also possible very precise estimation of location-related information such as time-of-flight (TOF), angle-of-arrival (AOA), and angle-of-departure (AOD), which can enable applications requiring accurate localization \cite{5G_vehic,Peral,Kakkavas}. 

The localization capabilities of mmWave MIMO systems have received significant attention. In \cite{AbuShaban} the Cram\'{e}r-Rao Lower bound (CRLB) for the problem of 3D localization is derived, highlighting the main differences in achievable accuracy between uplink (UL) and downlink (DL) channels. The theoretical analysis revealed that mmWave MIMO systems can provide cm-level accuracy even when the positioning process is supported by a single base station (BS). Over the last years, a number of localization algorithms have appeared in the literature \cite{Arash,Nil,ESPRIT,Koivisto}.

Differently from conventional radio-frequency systems, the peculiar characteristics of mmWave MIMO channels make it possible to estimate position-related parameters for each received non-line-of-sight (NLOS) path \cite{Palacios}; remarkably, the Fisher information analysis in \cite{Mendrzik} revealed that NLOS components provide additional information over the line-of-sight (LOS) path, which can be fruitfully leveraged to improve the localization performance. 
In addition to accurately localizing one or more users, mmWave MIMO can be also exploited  to progressively build a map of the radio environment over time, a problem that can be categorized as a simultaneous localization and mapping (SLAM) problem (for more details on the topic, please refer to \cite{SLAM1,SLAM2}). 
A few papers have recently started to address this problem, specifically to exploit NLOS paths for both position estimation and  mapping in mmWave MIMO \cite{Destino,Mendrzik2,Battistelli,Leitinger2}. Thanks to the high temporal and spatial resolution, the TOFs, AOAs and AODs originating from multipath propagation can be directly linked to the positions of BSs, users, and physical scatterers or reflectors at each time instant, allowing the SLAM problem to be solved using only a single snapshot of the environment. 

While mmWave MIMO enables high positioning and mapping  accuracy with a single snapshot, 
it  requires the deployment of large-scale antenna arrays at the user side, considerably increasing the complexity and cost of the overall system. In contrast, 
 mobile users using smartphones, as well as wearable/portable IoT devices, will be initially equipped with one or very few antennas \cite{Westberg}. 
Localization and mapping using a single antenna at both transmit and receive side, namely single-input single-output (SISO), has been addressed in the context of ultrawide-band systems. Differently from the MIMO case, only TOF or RSS information can be used in the estimation process, which in turn requires multiple snapshots corresponding to different positions of the user to lead to an identifiable SLAM solution 
\cite{Gentner,Leitinger_SLAM}.

In this paper, we aim to partially close the knowledge gap between MIMO and SISO systems, and investigate the problem of  single-snapshot localization and mapping in the challenging multiple-input single-output (MISO) case of a mmWave system, where the user is equipped with a single-antenna receiver while the BS has a transmit array. Specifically, we exploit the TOF and AOD information associated to the DL signals transmitted from a single BS, allowing single-snapshot localization and mapping, even in the presence of NLOS paths.  The use of DL as opposed to UL signals leads to better signal-to-noise ratio (SNR) conditions for the estimation problem \cite{ICASSP2020}. 
The main contributions of this work are as follows:
\begin{itemize}
    \item \emph{A fundamental Fisher information analysis} is conducted, which allows to understand the problem from a theoretical perspective, extending the CRLB analysis for the LOS-only scenario in  \cite{TWC_2019} with a thorough evaluation of the achievable performance when the NLOS paths are explicitly taken into account in the estimation process. Remarkably, we will show that accurate single-snapshot localization and mapping is still possible in a MISO setup, but in contrast to the MIMO case, map information does not increase the user position information;
    \item \emph{The derivation of the maximum likelihood (ML) estimator for localization and mapping} is provided, showing the equivalence of channel-domain and position-domain formulations. Furthermore, we show that mapping of the scatterers positions  depends also on the estimation accuracy of LOS parameters, in line with Fisher information analysis. 
   \item \emph{A low-complexity estimator} is proposed, by exploiting the sparsity of the mmWave channel.  We propose an efficient two-step algorithm which allows the computation of an accurate approximate solution of the joint ML estimation problem, but avoiding the need of a full-dimensional search in the space of the unknown parameters.
\end{itemize}

A thorough simulation analysis demonstrates that the proposed joint ML algorithm enables a very accurate estimation of the user position and mapping of the scatterers locations, with performances attaining the theoretical lower bounds even when the LOS path is severely attenuated.

The rest of the paper is organized as follows. In Sec.~II, we introduce the system model and describe in details the reference scenario. In Sec.~III, we derive and analyze the fundamental bounds on the estimation of the channel and location parameters in the considered MISO setup. Then, in Sec.~IV we formulate the joint ML estimation problem in the channel domain and propose a novel low-complexity localization and mapping approach; furthermore, we discuss the equivalence with the joint ML estimator in the position domain. The performance of the proposed approach is then assessed in Sec.~V. We conclude the paper in Sec.~VI.

\section{System Model}
The reference scenario addressed in this paper consists of a MISO system in which a BS, equipped with $N_{\text{\tiny BS}}$ antennas, communicates with a mobile station (MS) equipped with a single antenna receiver. The system operates at a carrier frequency $f_c$ (corresponding to wavelength $\lambda_c$) and uses signals having bandwidth $B$. Without loss of generality, the BS is located in the origin, i.e., $\bm{p}_\text{\tiny BS} = [0 \ 0]^{\text T}$, while we denote by $\bm{p} = [p_x \ p_y]^{\text T}$ the unknown position of the MS.

\subsection{Transmitter Model}
We consider the transmission of orthogonal frequency division multiplexing (OFDM) signals. Particularly, we assume that $G$ signals are broadcast in DL sequentially, with the $g$-th transmission consisting in $M$ simultaneously transmitted symbols over each subcarrier $n=0,\ldots,N-1$, i.e.,
$\bm{x}^{g}[n] = \left[x_1[n]\ \cdots\ x_M[n]\right]^{\text T} \in \mathbb{C}^{M \times 1}$, 
with $P_t = \mathbb{E}\left[\|\bm{x}^{g}[n]\|^2\right]$ the transmitted power and $\mathbb{E}[\point]$ denoting the expectation operator. After precoding, the symbols are transformed to the time-domain using an $N$-point Inverse Fast Fourier Transform (IFFT). A cyclic prefix (CP) of length $T_{\text{CP}} = DT_S$ is added before the radio-frequency (RF) precoding, with $D$ number of symbols in the CP and $T_S = 1/B$  the sampling period. 

The signal transmitted over subcarrier $n$ at time $g$ is expressed as $\bm{z}^{g}[n]=\bm{F}^{g}[n]\bm{x}^{g}[n]$, with $\bm{F}^g[n] \in \mathbb{C}^{N_{\text{\tiny BS}} \times M}$ denoting the beamforming matrix applied at the transmit side. In absence of a priori knowledge about the user location, the $M$ beams in the beamforming matrix are typically set to ensure a uniform coverage of the considered area.
Furthermore, a total power constraint  $\|\bm{F}^g[n]\|_{\text{F}} = 1$ is imposed to the transmit beamforming \cite{Alkhateeb1}. Given the typical sparsity of the mmWave channels, less beams than antenna elements can be considered, i.e., $M \leq N_{\text{\tiny BS}}$ \cite{brady1,brady2}. 

\subsection{Channel Model}
We assume that a direct LOS link exists between the BS and the MS, and that additional NLOS paths due to local scatterers or reflectors may also be present.
\begin{figure}
\centering
 \includegraphics[width=0.5\linewidth]{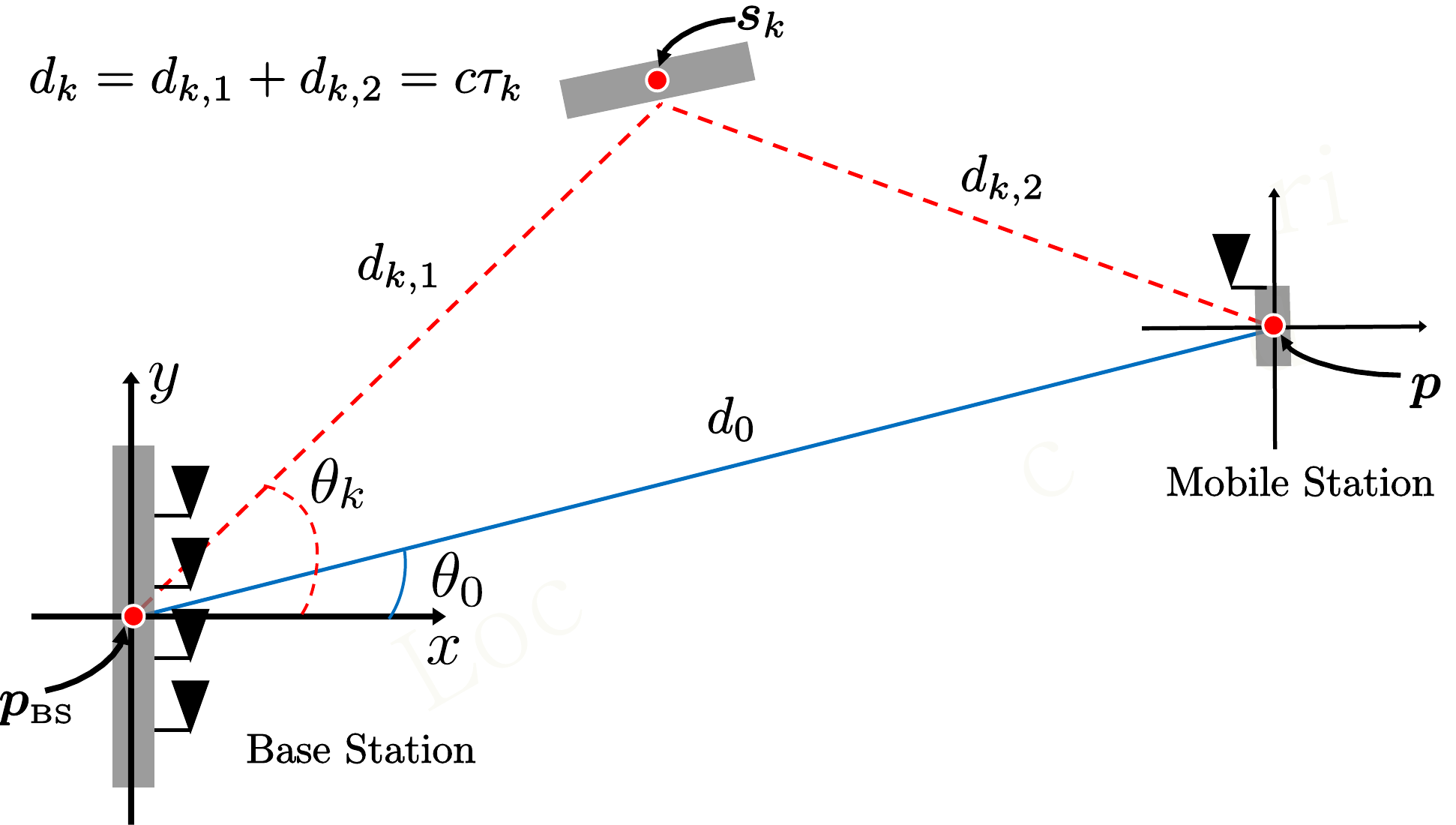}
 	\caption{Geometry of the considered two-dimensional localization and mapping scenario.}
\label{fig:scenario}
 \end{figure}
 For the sake of the analysis, we also assume that the system has been 
 synchronized during an initial phase using, e.g., a two-way protocol \cite{AbuSync,Sark}.
The different position-related parameters of the channel are depicted in Fig.~\ref{fig:scenario}. These parameters include $\theta_{k}$ and $\tau_k$, denoting the AOD and TOF related to the $k$-th path, respectively. In the following, $k = 0$ corresponds to the LOS link and $k \geq 1$ denotes the NLOS paths. Moreover, we denote by $\bm{s}_k = [s_{k,x} \ s_{k,y}]^{\text T}$ the unknown position of the scatterer giving rise to $k$-th NLOS path, for which $d_{k,1} = \| \bm{s}_k - \bm{p}_\text{\tiny BS}\| = \| \bm{s}_k \|$ and $d_{k,2} = \|\bm{p} - \bm{s}_k\| $, with $\| \point \|$ denoting the Euclidean distance. We consider  by convention $\bm{s}_0 \equiv \bm{p}$, making all expressions well-defined also for $k=0$. 
Assuming $K+1$ paths, the $1 \times N_{\text{\tiny BS}}$ complex channel vector associated with subcarrier $n$ is given by
\begin{equation}\label{eq::fullchannelmodel}
\bm{h}^{\text T}[n] = \bm{\zeta}^{\text T}[n]\bm{A}^{\text H}_{\text{\tiny BS}}
\end{equation}
where we leveraged $\lambda_n = c/(\frac{n}{NT_S} + f_c) \approx \lambda_c \,\forall n$ (with $c$ denoting the speed of light), i.e., the typical narrowband condition. The array response matrix is given by
\begin{align}
    \bm{A}_{\text{\tiny BS}} & = [\bm{a}_{\text{\tiny BS}}(\theta_0), \ldots,\bm{a}_{\text{\tiny BS}}(\theta_K) ]
\end{align}
and $[\bm{\zeta}[n]]_k=\sqrt{N_{\text{\tiny BS}}}\alpha_k \e^{\frac{-j2\pi n \tau_k}{N T_S}}$, 
where $\alpha_k = h_k/\sqrt\rho_k$, with $\rho_k$ the path loss and $h_k$ denoting the complex channel gain of the $k$-th path, respectively.
Without loss of generality, in the following we consider a Uniform Linear Array (ULA) without mutual antenna coupling and with isotropic antennas, whose steering vector can be expressed as
\begin{equation}\label{eq::steervect}
\bm{a}_{\text{\tiny BS}}(\theta) = \frac{1}{\sqrt{N_{\text{\tiny BS}}}} \left[1 \ e^{j\frac{2\pi}{\lambda_c} d\sin\theta} \cdots \ e^{j(N_{\text{\tiny BS}}-1)\frac{2\pi}{\lambda_c}d\sin\theta}\right]^{\text T}
\end{equation}
where $d = \frac{\lambda_c}{2}$ denotes the ULA interelement spacing. 

\subsection{Received Signal Model}
The received signal related to the $n$-th subcarrier and transmission $g$, after CP removal and Fast Fourier Transform (FFT), is given by
\begin{equation}\label{eq::recsignals}
y^{g}[n] = \bm{h}^{\text T}[n]\bm{F}^{g}[n]\bm{x}^{g}[n] + \nu^{g}[n]
\end{equation}
where $\nu^g[n]$ is the additive circularly complex Gaussian noise with zero mean and variance $\sigma^2$. The objective of the paper is to determine the unknown MS position $\bm{p}$ as well as to map the location of the scatters $\bm{s}_k$, $k \geq 1$ present in the environment from the set of all received signals 
\begin{equation}\label{eq::Y}
\bm{Y} = \begin{bmatrix}
    y^1[0] & \cdots & y^G[0] \\
    \vdots& \ddots & \vdots \\
    y^1[N-1] & \cdots & y^G[N-1] 
  \end{bmatrix}.
\end{equation}

\section{MISO: Fundamental Bounds in Multipath Scenario}\label{sec:bounds}

In this section, we derive the expressions of the Fisher Information Matrix (FIM) and CRLB related to the estimation of the MS position $\bm{p}$ and scatterers positions $\bm{s}_k$. As a first step, we evaluate the theoretical bounds on the estimation of the channel parameters (i.e., AODs, TOFs, and channel gains). Subsequently, such bounds are transformed in the position domain and further analyzed to gain insights on the achievable performance in terms of joint localization of the user and mapping of the environment.

\subsection{FIM on Channel Parameters}
Let $\bm{\gamma} \in \mathbb{R}^{4(K+1) \times 1}$ denotes the vector of the unknown channel parameters
$\bm{\gamma} = [\bm{\gamma}_0^{\text T} \cdots \bm{\gamma}_K^{\text T}]^{\text T}$, 
where each $\bm{\gamma}_k$ consists of the channel complex amplitude, TOF and AOD for the $k$-th path and is given by
$\bm{\gamma}_k = [r_k \ \phi_k \ \tau_k \ \theta_k]^{\text T}$.
Defining $\hat{\bm{\gamma}}$ as an unbiased estimator of $\bm{\gamma}$, it is well-known that the mean squared error (MSE) is lower bounded as 
\begin{equation}
\mathbb{E}_{\bm{Y}|\bm{\gamma}}\left[(\hat{\bm{\gamma}} - \bm{\gamma})(\hat{\bm{\gamma}} - \bm{\gamma})^{\text T} \right] \succeq \bm{J}^{-1}_{\bm{\gamma}}
\end{equation}
where $\mathbb{E}_{\bm{Y}|\bm{\gamma}}[\point]$ denotes the expectation parameterized as function of the unknown vector $\bm{\gamma}$ and $\bm{J}_{\bm{\gamma}}$ is the $4(K+1) \times 4(K+1)$ FIM defined as $\bm{J}_{\bm{\gamma}} = \mathbb{E}_{\bm{Y}|\bm{\gamma}}\left[-\frac{\partial^2 \log f(\bm{Y}|\bm{\gamma})}{\partial \bm{\gamma}\partial \bm{\gamma}^{\text T}} \right].$
The FIM  can be structured as
\begin{equation}\label{eq::FIMchannel}
\bm{J}_{\bm{\gamma}} = \left[\begin{matrix}
\bm{\Lambda}(\bm{\gamma}_0,\bm{\gamma}_0) & \cdots & \bm{\Lambda}(\bm{\gamma}_0,\bm{\gamma}_K) \\
\vdots & \ddots & \vdots \\
\bm{\Lambda}(\bm{\gamma}_K,\bm{\gamma}_0) & \cdots & \bm{\Lambda}(\bm{\gamma}_K,\bm{\gamma}_K) \\
\end{matrix}
\right]
\end{equation}
where the $4 \times 4$ matrix $\bm{\Lambda}(\bm{\gamma}_h,\bm{\gamma}_\ell)$ is given by
\begin{align}\label{eq::FIMblock}
\bm{\Lambda}(\bm{\gamma}_h,\bm{\gamma}_\ell) &= \mathbb{E}_{\bm{Y}|\bm{\gamma}}\left[-\frac{\partial^2 \log f(\bm{Y}|\bm{\gamma})}{\partial \bm{\gamma}_h\partial \bm{\gamma}_\ell^{\text T}} \right] \nonumber \\
& \!\!\!\!\!\!\!\!\!\!\!\!\!\!\!\!\!\!\!\!= \left[\begin{matrix}
\Lambda(r_h,r_\ell) & \Lambda(r_h,\phi_\ell) & \Lambda(r_h,\tau_\ell) & \Lambda(r_h,\theta_\ell) \\
\Lambda(\phi_h,r_\ell) & \Lambda(\phi_h,\phi_\ell) & \Lambda(\phi_h,\tau_\ell) & \Lambda(\phi_h,\theta_\ell) \\
\Lambda(\tau_h,r_\ell) & \Lambda(\tau_h,\phi_\ell) & \Lambda(\tau_h,\tau_\ell) & \Lambda(\tau_h,\theta_\ell) \\
\Lambda(\theta_h,r_\ell) & \Lambda(\theta_h,\phi_\ell) & \Lambda(\theta_h,\tau_\ell) & \Lambda(\theta_h,\theta_\ell)
\end{matrix} \right]
\end{align}
with $h,\ell = 0,\ldots,K$. Substituting $\bm{Y}$ from \eqref{eq::Y} in \eqref{eq::FIMblock} and accounting for the noise statistics yields 
\begin{equation}\label{eq::FIMelemformula}
\Lambda(\bm{\gamma}_h,\bm{\gamma}_\ell) = \frac{2}{\sigma^2} \sum_{g=1}^G \sum_{n=0}^{N-1} \Re\left\{\left(\frac{\partial m^g[n]}{\partial \bm{\gamma}_h}\right)^{\text H} \frac{\partial m^g[n]}{\partial \bm{\gamma}_\ell}\right\}
\end{equation}
with $\Re\{\point\}$ denoting the real-part operator and the noise-free observation at subcarrier $n$, transmission $g$
is 
$m^{g}[n]=\sqrt{N_{\text{\tiny BS}}} \sum_{k=0}^K\alpha_k\exp\left(\frac{-j2 \pi n\tau_k}{NT_S}\right)\bm{a}^{\text H}_{\text{\tiny BS}}(\theta_k)\bm{z}^{g}[n]$, 
with $\alpha_k= h_k/\sqrt\rho_k \eqdef r_ke^{j\phi_k}$ with $r_k$ and $\phi_k$ modulus and phase of the complex amplitude $\alpha_k$, respectively. 
We report in Appendix \ref{appA} the value of each entry of $\bm{\Lambda}(\bm{\gamma}_h,\bm{\gamma}_\ell)$. 
Two paths $h$ and $\ell$ are said to be orthogonal when $\bm{\Lambda}(\bm{\gamma}_h,\bm{\gamma}_\ell)=\bm{0}_{4 \times 4}$ \cite{cox1987parameter}, with $\bm{0}_{L \times L}$ a $L \times L$ matrix of zeros.

\subsection{FIM on Position Parameters}
In this section, we derive the FIM in the position domain by applying a transformation of variables from the vector of channel parameters  $\bm{\gamma}$ to a new vector of location parameters $\bm{\eta} = [\bm{\eta}_0^{\text T} \cdots \bm{\eta}_K^{\text T}]^{\text T}$, where
$\bm{\eta}_0 =  [r_0 \ \phi_0 \ p_x \ p_y]^{\text T}$  and  $\bm{\eta}_k = [r_k \ \phi_k \ s_{k,x} \ s_{k,y}]^{\text T}$, for $k \geq 1$.
More specifically, by exploiting the geometric relationships between the parameters in $\bm{\gamma}$ and $\bm{\eta}$, we have 
\begin{align}
    \tau_0 & = \| \bm{p}\|/c \label{eq:tau0}\\
    \theta_0 & = \mathrm{atan2}(p_y,p_x) \label{eq:theta0}\\
    \tau_k & = \|\bm{s}_k\|/c + \|\bm{p} - \bm{s}_k \|/c, \;\; k\geq 1  \label{eq:tauk}\\
\theta_k & = \mathrm{atan2}(s_{k,y},s_{k,x}), \;\; k\geq 1, \label{eq:thetak}
\end{align}
%
 where the function $\mathrm{atan2}(y,x)$ is the four-quadrant inverse tangent, and the angles are measured counterclockwise with respect to the $x$-axis.
 
The FIM in the position space $\bm{\eta}$ is obtained by means of the $4(K+1) \times 4(K+1)$ transformation matrix $\bm{T}$ as
\begin{equation}\label{eq::FIMpos}
\bm{J}_{\bm{\eta}} = \bm{T}\bm{J}_{\bm{\gamma}}\bm{T}^{\text{T}}    
\end{equation}
where
\begin{equation}
   \bm{T} \eqdef \frac{\partial \bm{\gamma}^{\text{T}}}{\partial \bm{\eta}} =  \begin{bmatrix}
   \bm{T}_{0,0} & \ldots & \bm{T}_{K,0}\\
   \vdots & \ddots & \vdots \\
   \bm{T}_{0,K} & \ldots & \bm{T}_{K,K}
   \end{bmatrix}.
\end{equation}
and each submatrix $\bm{T}_{h,\ell}$, $h,\ell=0,\ldots,K$, 
is given by
\begin{align}
\bm{T}_{h,\ell}  \eqdef \frac{\partial \bm{\gamma}_h^{\text{T}}}{\partial \bm{\eta}_\ell} &= \begin{bmatrix}
\partial r_h/\partial r_\ell & \partial \phi_h/\partial r_\ell & \partial \tau_h/\partial r_\ell & \partial \theta_h/\partial r_\ell\\
\partial r_h/\partial \phi_\ell & \partial \phi_h/\partial \phi_\ell & \partial \tau_h/\partial \phi_\ell & \partial \theta_h/\partial \phi_\ell\\
\partial r_h/\partial \bm{s}_\ell & \partial \phi_h/\partial \bm{s}_\ell & \partial \tau_h/\partial \bm{s}_\ell & \partial \theta_h/\partial \bm{s}_\ell
\end{bmatrix} \nonumber\\
&  = \begin{bmatrix}
\delta_{h \ell} & 0 & 0 & 0\\
0 & \delta_{h \ell} & 0 & 0\\
0 & 0 & \partial \tau_h/\partial \bm{s}_\ell & \partial \theta_h/\partial \bm{s}_\ell
\end{bmatrix}
\end{align}
where $\delta_{h \ell}$ is the Kronecker symbol and
\begin{align}
    \frac{\partial \tau_0}{\partial \bm{p}} & =  \frac{1}{c} \left[\frac{p_x}{\|\bm{p}\|} \ \frac{p_y}{\|\bm{p}\|} \right]^{\text T}   \nonumber \\
    \frac{\partial \theta_0}{\partial \bm{p}} & = \left[\frac{-p_y/p^2_x}{1 + (p_y/p_x)^2} \ \frac{1/p_x}{1 + (p_y/p_x)^2} \right]^{\text T}\nonumber\\
    \frac{\partial \tau_h}{\partial \bm{p}}& = \frac{1}{c} \left[\frac{p_x-s_{h,x}}{\|\bm{p}-\bm{s}_h\|} \ \frac{p_y-s_{h,y}}{\|\bm{p}-\bm{s}_h\|} \right]^{\text T} \nonumber\\
    \frac{\partial \tau_h}{\partial \bm{s}_h}& = \frac{1}{c} \left[\left(\frac{s_{h,x}}{\|\bm{s}_h\|} - \frac{(p_x - s_{h,x})}{\|\bm{p} -\bm{s}_h \|} \right)  \left(\frac{s_{h,y}}{\|\bm{s}_k\|} - \frac{(p_y - s_{h,y})}{\|\bm{p} -\bm{s}_h \|} \right) \right]^{\text T}\nonumber\\
    \frac{\partial \theta_h}{\partial \bm{s}_h} & = \left[\frac{-s_{h,y}/s^2_{h,x}}{1 + (s_{h,y}/s_{h,x})^2} \ \frac{1/s_{h,x}}{1 + (s_{h,y}/s_{h,x})^2} \right]^{\text T}, \nonumber 
\end{align}
%
with the last two equations meant for $h\neq 0$, and $\bm{T}_{h,\ell} = \bm{0}_{4 \times 4}$ for $\ell\ge 1$ and $\ell\neq h$.

\subsection{Bounds on MS Position Estimation Error}
To derive the lower bound on the uncertainty of MS position estimation, we consider the CRLB in the location domain obtained by inverting the FIM $\bm{J}_{\bm{\eta}}$ in \eqref{eq::FIMpos}, i.e., $\bm{\Sigma}_p = \bm{J}^{-1}_{\bm{\eta}}$. Specifically, the position error bound (PEB) is computed by adding the third and fourth diagonal entries of the $\bm{\Sigma}_p$ matrix, and taking the square root as
\begin{equation}\label{eq::PEB}
\text{PEB} = \sqrt{\left[\bm{\Sigma}_p\right]_{3,3} + \left[\bm{\Sigma}_p\right]_{4,4}}
\end{equation}
where $[\point]_{j,j}$ selects the $j$-th diagonal entry of $\bm{\Sigma}_p$.


\subsection{Role of NLOS Components on MS Position Estimation}\label{sec::roleNLOS}
In the previous subsections, we have derived the fundamental bounds on the estimation of the unknown channel and position parameters. Based on that, we now discuss how the presence of NLOS paths impacts on the estimation of the MS position $\bm{p}$ under the considered MISO setup. We start by recalling that the CRLB matrix $\bm{\Sigma}_p$ matrix is given by $\bm{\Sigma}_p = (\bm{T}\bm{J}_{\bm{\gamma}}\bm{T}^\text{T})^{-1}$. 
Focusing on the vectors $\bm{\gamma}_k$ and $\bm{\eta}_k$ of the $k$-th path, it is interesting to note that the number of parameters in both channel and location domains is the same, and there exists a bijective relationship between them (see eqs.~\eqref{eq:tau0}--\eqref{eq:thetak}). Consequently, $\bm{\Sigma}_p$ can be equivalently expressed as $\bm{\Sigma}_p = (\bm{T}^{-1})^\text{T}\bm{J}^{-1}_{\bm{\gamma}}\bm{T}^{-1}
$, 
where, by invoking the multivariate inverse function theorem, the inverse transformation matrix $\bm{T}^{-1}$ can be directly computed as the derivative of the location parameters with respect to the channel parameters, i.e.,
\begin{equation}
   \bm{T}^{-1} = \frac{\partial \bm{\eta}^{\text{T}}}{\partial \bm{\gamma}} =  \begin{bmatrix}
   \bar{\bm{T}}_{0,0} & \ldots & \bar{\bm{T}}_{K,0}\\
   \vdots & \ddots & \vdots \\
   \bar{\bm{T}}_{0,K} & \ldots & \bar{\bm{T}}_{K,K}
   \end{bmatrix}
\end{equation}
with each $4 \times 4$ block $\bar{\bm{T}}_{h,\ell}$, $h,\ell = 0,\ldots,K$, obtained as
\begin{equation}
\bar{\bm{T}}_{h,\ell}  \eqdef \frac{\partial \bm{\eta}_h^{\text{T}}}{\partial \bm{\gamma}_\ell} =  \begin{bmatrix}
\delta_{h \ell} & 0 & 0\\
0 & \delta_{h \ell} & 0\\
0 & 0 & \partial \bm{s}_h/\partial \tau_{\ell}\\
0 & 0 & \partial \bm{s}_h/\partial \theta_{\ell}
\end{bmatrix}.  
\end{equation}
By noting that the blocks $\bar{\bm{T}}_{h,\ell} = \bm{0}_{4 \times 4}$ for $l\geq 1$ and $h\neq\ell$, it follows that
\begin{equation}\label{eq::structureSigmap}
\bm{\Sigma}_p =  \begin{bmatrix}
\bar{\bm{T}}^{\text{T}}_{0,0} & \!\!\!\! \bm{0}_{4 \times 4} & \!\!\!\! \bm{0}_{4 \times 4} & \!\!\!\! \bm{0}_{4 \times 4} \\
\bar{\bm{T}}^{\text{T}}_{1,0} & \!\!\!\! \bar{\bm{T}}^{\text{T}}_{1,1}  & \!\!\!\! \bm{0}_{4 \times 4} & \!\!\!\! \bm{0}_{4 \times 4} \\
\bar{\bm{T}}^{\text{T}}_{2,0} & \!\!\!\! \bm{0}_{4 \times 4} & \!\!\!\! \bar{\bm{T}}^{\text{T}}_{2,2} & \!\!\!\! \bm{0}_{4 \times 4} \\
\bar{\bm{T}}^{\text{T}}_{3,0} & \!\!\!\! \bm{0}_{4 \times 4} & \!\!\!\! \bm{0}_{4 \times 4} & \!\!\!\! \bar{\bm{T}}^{\text{T}}_{3,3}
\end{bmatrix} \bm{J}^{-1}_{\bm{\gamma}} \begin{bmatrix}
\bar{\bm{T}}_{0,0} & \!\!\!\! \bar{\bm{T}}_{1,0} & \!\!\!\! \bar{\bm{T}}_{2,0} & \!\!\!\! \bar{\bm{T}}_{3,0} \\
\bm{0}_{4 \times 4} & \!\!\!\! \bar{\bm{T}}_{1,1}  & \!\!\!\! \bm{0}_{4 \times 4} & \!\!\!\! \bm{0}_{4 \times 4} \\
\bm{0}_{4 \times 4} & \!\!\!\! \bm{0}_{4 \times 4} & \!\!\!\! \bar{\bm{T}}_{2,2} & \!\!\!\! \bm{0}_{4 \times 4} \\
\bm{0}_{4 \times 4} & \!\!\!\! \bm{0}_{4 \times 4} & \!\!\!\! \bm{0}_{4 \times 4} & \!\!\!\! \bar{\bm{T}}_{3,3}
\end{bmatrix}
\end{equation}
This leads us to our first main result. 
\begin{thm}
Denoting by $\text{PEB}_{k}$ the value of the PEB when $k$ NLOS paths besides the LOS path are present, with $0 \leq k  \leq K$, then 
\begin{align}
    \text{PEB}_{K} \ge \text{PEB}_{0},
\end{align}
with equality when all paths are orthogonal.
\end{thm}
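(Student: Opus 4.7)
The plan is to exploit the block-sparsity of $\bar{\bm{T}}$ revealed in \eqref{eq::structureSigmap} to reduce the analysis of the whole $\bm{\Sigma}_p$ to a single $4\times 4$ block, and then to compare that block with its LOS-only counterpart via a Schur-complement argument on $\bm{J}_{\bm{\gamma}}$. First I would observe that, because $\bar{\bm{T}}_{0,\ell}=\bm{0}_{4\times 4}$ for every $\ell\ge 1$, the only block-row of $\bar{\bm{T}}^{\text T}$ that is nonzero in its first block is $[\bar{\bm{T}}_{0,0}^{\text T},\bm{0},\ldots,\bm{0}]$. Carrying out the block product in \eqref{eq::structureSigmap} therefore collapses to
\begin{equation}
[\bm{\Sigma}_p]_{0,0}=\bar{\bm{T}}_{0,0}^{\text T}\,[\bm{J}_{\bm{\gamma}}^{-1}]_{0,0}\,\bar{\bm{T}}_{0,0},
\end{equation}
where $[\bm{J}_{\bm{\gamma}}^{-1}]_{0,0}$ is the top-left $4\times 4$ block of the inverse channel FIM. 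Crucially, $\bar{\bm{T}}_{0,0}$ depends only on the LOS geometry (derivatives of $\bm{p}$ with respect to $\tau_0,\theta_0$), so every trace of the scatterer information has been pushed into $[\bm{J}_{\bm{\gamma}}^{-1}]_{0,0}$.

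The next step is to compare $[\bm{J}_{\bm{\gamma}}^{-1}]_{0,0}$ with the LOS-only FIM inverse $\bm{\Lambda}(\bm{\gamma}_0,\bm{\gamma}_0)^{-1}$. Partitioning
\begin{equation}
\bm{J}_{\bm{\gamma}}=\begin{bmatrix}\bm{\Lambda}(\bm{\gamma}_0,\bm{\gamma}_0) & \bm{B}\\ \bm{B}^{\text T} & \bm{C}\end{bmatrix},
\end{equation}
where $\bm{B}$ stacks the cross-blocks $\bm{\Lambda}(\bm{\gamma}_0,\bm{\gamma}_k)$, $k\ge 1$, and $\bm{C}$ collects the remaining NLOS blocks, the standard block-inversion identity yields $[\bm{J}_{\bm{\gamma}}^{-1}]_{0,0}=\bigl(\bm{\Lambda}(\bm{\gamma}_0,\bm{\gamma}_0)-\bm{B}\bm{C}^{-1}\bm{B}^{\text T}\bigr)^{-1}$. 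Since $\bm{C}\succ\bm{0}$ (it is a principal submatrix of the positive-definite FIM $\bm{J}_{\bm{\gamma}}$), the correction $\bm{B}\bm{C}^{-1}\bm{B}^{\text T}$ is positive semidefinite, so $\bm{\Lambda}(\bm{\gamma}_0,\bm{\gamma}_0)-\bm{B}\bm{C}^{-1}\bm{B}^{\text T}\preceq\bm{\Lambda}(\bm{\gamma}_0,\bm{\gamma}_0)$ in the Loewner order; inverting reverses the inequality, giving $[\bm{J}_{\bm{\gamma}}^{-1}]_{0,0}\succeq \bm{\Lambda}(\bm{\gamma}_0,\bm{\gamma}_0)^{-1}$, with equality iff $\bm{B}=\bm{0}$, i.e., the LOS path is orthogonal (in the sense of the paper) to every NLOS path.

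Combining the two steps, conjugation by $\bar{\bm{T}}_{0,0}$ preserves the Loewner order, so
\begin{equation}
[\bm{\Sigma}_p]_{0,0}\succeq \bar{\bm{T}}_{0,0}^{\text T}\bm{\Lambda}(\bm{\gamma}_0,\bm{\gamma}_0)^{-1}\bar{\bm{T}}_{0,0}=[\bm{\Sigma}_p]_{0,0}\bigr|_{K=0}.
\end{equation}
In particular the third and fourth diagonal entries, which carry the variances on $p_x$ and $p_y$, can only grow (or stay equal) when NLOS paths are added. Taking the square root of the sum of these two entries as in \eqref{eq::PEB} yields $\text{PEB}_K\ge \text{PEB}_0$, with equality whenever all paths are mutually orthogonal since $\bm{B}=\bm{0}$ is then guaranteed. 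The same Schur step applied to any nested subset of paths between $k=0$ and $k=K$ gives the analogous monotonicity for $\text{PEB}_k$, so no separate argument is needed for intermediate cases.

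I expect the book-keeping of the block structure to be the main obstacle, rather than anything conceptually deep: one has to be careful that $\bar{\bm{T}}_{0,0}$ really is a square invertible matrix (so that conjugation is one-to-one on the Loewner cone), and that $\bm{C}$ is genuinely positive definite — which in turn hinges on the identifiability of the NLOS channel parameters from the received signal. Both facts are standard under the non-degenerate geometry assumed throughout Section~III, but they deserve to be stated explicitly at the beginning of the proof.
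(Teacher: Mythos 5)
Your proof is correct and follows essentially the same route as the paper's: reduce to the top-left $4\times 4$ block $\bar{\bm{T}}_{0,0}^{\text T}\bm{C}_0\bar{\bm{T}}_{0,0}$ via the block sparsity of $\bm{T}^{-1}$, bound $\bm{C}_0\succeq(\bm{\Lambda}(\bm{\gamma}_0,\bm{\gamma}_0))^{-1}$ by a Schur-complement argument, and conclude equality under orthogonality. Your explicit block-inversion identity merely spells out the Schur-complement step the paper invokes in one line, and as a small bonus it shows that orthogonality of the LOS path to the NLOS paths (i.e., $\bm{B}=\bm{0}$) already suffices for equality, which is slightly weaker than requiring all paths to be mutually orthogonal.
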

\begin{proof}
We first prove the inequality. We denote  
the $K+1$ $4\times4$ diagonal blocks of $\bm{J}^{-1}_{\bm{\gamma}}$ by $\bm{C}_0, \ldots, \bm{C}_K$. From the Schur complement, it follows that $\bm{C}_k \succeq (\bm{\Lambda}(\bm{\gamma}_k,\bm{\gamma}_k))^{-1}$.
From \eqref{eq::PEB}, it is evident that the relevant information on the estimation of $\bm{p}$ resides in the first $4 \times 4$ block of $\bm{\Sigma}_p$.  
By taking the products in \eqref{eq::structureSigmap}, it turns out that such a block is equal to $\bar{\bm{T}}^{\text T}_{0,0}\bm{C}_0\bar{\bm{T}}_{0,0}\succeq \bar{\bm{T}}^{\text T}_{0,0}(\bm{\Lambda}(\bm{\gamma}_0,\bm{\gamma}_0))^{-1}\bar{\bm{T}}_{0,0}$. Applying the PEB definition in (\ref{eq::PEB}) to both sides of this inequality, it turns out that $\text{PEB}_{K} \ge \text{PEB}_{0}$.


We now prove the equality when paths are orthogonal. 
Under typical mmWave conditions, the different received paths can be resolved either in the angular or time domains, with practically negligible overlap among them. In other words, the NLOS paths can be treated as orthogonal paths carrying independent information \cite{Leitinger, AbuShaban}, leading in turn to 
$\bm{\Lambda}(\bm{\gamma}_h,\bm{\gamma}_\ell) = \bm{0}_{4 \times 4}$ for $h \neq \ell$ in  \eqref{eq::FIMchannel}. 
Neglecting these terms, the approximate expression of the $\bm{\Sigma}_p$ is given by \eqref{eq::approx_CRLBp}, where now  $\bm{C}_k = (\bm{\Lambda}(\bm{\gamma}_k,\bm{\gamma}_k))^{-1}$. It then immediately follows that $\text{PEB}_{K} = \text{PEB}_{0}$. 
\end{proof}
\begin{figure*}[!t]
\small
\newcounter{MYtempeqncnt}
\setcounter{MYtempeqncnt}{\value{equation}}
\setcounter{equation}{21}
\begin{equation}\label{eq::approx_CRLBp}
\bm{\Sigma}_p \approx  \begin{bmatrix}
 \bar{\bm{T}}^{\text T}_{0,0}\bm{C}_0\bar{\bm{T}}_{0,0} \!\!\!& \bar{\bm{T}}^{\text T}_{0,0}\bm{C}_0 \bar{\bm{T}}_{1,0} \!\!\!& \bar{\bm{T}}^{\text T}_{0,0}\bm{C}_0 \bar{\bm{T}}_{2,0} \!\!\!& \cdots \!\!\!& \bar{\bm{T}}^{\text T}_{0,0}\bm{C}_0 \bar{\bm{T}}_{K,0} \\ 
 \bar{\bm{T}}^{\text T}_{1,0}\bm{C}_0\bar{\bm{T}}_{0,0} \!\!\!& \bar{\bm{T}}^{\text T}_{1,0}\bm{C}_0 \bar{\bm{T}}_{1,0} + \bar{\bm{T}}^{\text T}_{1,1}\bm{C}_1 \bar{\bm{T}}_{1,1} \!\!\!& \bar{\bm{T}}^{\text T}_{1,0}\bm{C}_0\bar{\bm{T}}_{2,0} \!\!\!& \cdots \!\!\!& \bar{\bm{T}}^{\text T}_{1,0}\bm{C}_0\bar{\bm{T}}_{K,0}\\
 \bar{\bm{T}}^{\text T}_{2,0}\bm{C}_0\bar{\bm{T}}_{0,0} \!\!\!& \bar{\bm{T}}^{\text T}_{2,0}\bm{C}_0 \bar{\bm{T}}_{1,0} \!\!\!& \bar{\bm{T}}^{\text T}_{2,0}\bm{C}_0 \bar{\bm{T}}_{2,0} + \bar{\bm{T}}^{\text T}_{2,2}\bm{C}_2 \bar{\bm{T}}_{2,2} \!\!\!& \cdots \!\!\!& \bar{\bm{T}}^{\text T}_{2,0}\bm{C}_0\bar{\bm{T}}_{K,0}\\
 \vdots \!\!\!& \vdots \!\!\!& \vdots \!\!\!&  \ddots \!\!\!& \vdots\\
 \bar{\bm{T}}^{\text T}_{K,0}\bm{C}_0\bar{\bm{T}}_{0,0} \!\!\!& \bar{\bm{T}}^{\text T}_{K,0}\bm{C}_0\bar{\bm{T}}_{1,0} \!\!\!& \bar{\bm{T}}^{\text T}_{K,0}\bm{C}_0\bar{\bm{T}}_{2,0} \!\!\!&  \cdots \!\!\!& \bar{\bm{T}}^{\text T}_{K,0}\bm{C}_0 \bar{\bm{T}}_{K,0} + \bar{\bm{T}}^{\text T}_{K,K}\bm{C}_K \bar{\bm{T}}_{K,K}\\
\end{bmatrix}
\end{equation}
\setcounter{equation}{\value{MYtempeqncnt}}
\hrulefill
\vspace*{4pt}
\end{figure*}

This effect relates to the fact that the MS is equipped with a single-antenna receiver, hence it cannot exploit the NLOS parameters to gain additional position-related information (i.e., AOAs). This represents a major difference compared to the MIMO setup where, in general, the contribution of the NLOS components can result in a reduction of the PEB \cite{Witrisal, Arash}. 
In the MISO case, multipath propagation will degrade the MS localization only when the NLOS paths and the LOS significantly overlap, but will never improve the PEB compared to the LOS-only case. From such considerations, it also follows that in the MISO setup (i) localization without LOS is not possible; and (ii) the NLOS paths cannot be used to synchronize the MS to the BS. Both aspects are in contrast to the MIMO case, as shown in \cite{Henk5G}. 

Remarkably, we observe from \eqref{eq::approx_CRLBp} that mapping of the scatterers positions is still possible in spite of the fact that the receiver has only a single antenna, that is, it cannot perform any spatial processing. More specifically, the terms in the main diagonal of \eqref{eq::approx_CRLBp} reveal that the accuracy in the estimation of each scatterer's position $\bm{s}_k$ is linked to the parameters of the associated $k$-th NLOS path $\bar{\bm{T}}^{\text T}_{k,k}\bm{C}_k\bar{\bm{T}}_{k,k}$, as well as to the parameters related to the LOS link $\bar{\bm{T}}^{\text T}_{k,0}\bm{C}_0\bar{\bm{T}}_{k,0}$. Given the additive nature of such terms, the lower the uncertainty in the LOS parameters, the higher the accuracy in mapping the multipath environment. 





\stepcounter{equation}
\section{Joint Maximum Likelihood Localization and Mapping}

In this section, we present the joint maximum likelihood (ML) estimator, a low-complexity channel estimator working in two dimensions, and the localization and mapping algorithm. We also show that the ML estimator can be performed equivalently in the position domain, and provide insights into the obtained solutions.
\subsection{Maximum Likelihood Estimation of Channel Parameters}\label{sec:jointML_channel}
We start the derivation by noting that each received signal $y^{g}[n]$, $1 \leq g \leq G$, $0 \leq n \leq N-1$, can be statistically characterized as 
\begin{equation}
y^{g}[n] \sim \mathcal{CN}(\sqrt{N_{\text{\tiny BS}}}  \bar{\bm{h}}^{\text{T}}[n]\bm{z}^{g}[n], \sigma^2)
\end{equation}
where $\bar{\bm{h}}^{\text{T}}[n] = \sum_{k=0}^K \alpha_k\e^{\frac{-j2\pi n \tau_k}{NT_S}}\bm{a}^{\text H}_{\text{\tiny BS}}(\theta_k)$ and all the parameters are treated as deterministic unknowns, except the transmitted symbols $\bm{z}^{g}[n]$, which are assumed known to the receiver, and the number of paths $K$, which can be determined during the initial access phase \cite{Barati, Giordani}. To formulate the estimation problem, we re-order  the unknown parameters as $\bm{\varphi} =[\bm{\Theta}^{\text{T}} \  \bm{\psi}^{\text{T}}]^{\text{T}}$, 
where $\bm{\Theta} = [\theta_0 \ \tau_0 \ \cdots \ \theta_K \ \tau_K]^{\text{T}}$ represents the parameters of interest linked to the desired MS and scatterers positions, while $\bm{\psi} = [\sigma^2\ \bm{\alpha}^{\text{T}}]^{\text{T}}$ with $\bm{\alpha} = [\alpha_0 \ \cdots \ \alpha_K]^{\text T}$ denotes the vector of nuisance parameters. Notice that, differently from the LOS-only scenario, including the NLOS links in the localization process introduces additional unknown parameters that make the resulting estimation problem much more challenging. Following the ML criterion, the estimation problem can be thus formulated as
\begin{align}\label{eq::MLIPE1}
\hat{\bm{\Theta}} = \mathrm{arg} \max_{\bm{\Theta}} [\max_{\bm{\psi}} L(\bm{\Theta}, \bm{\psi})]
\end{align}
where $L(\bm{\Theta}, \bm{\psi}) \eqdef \log f(\bm{Y}|\bm{\Theta}, \bm{\psi})$ and $f(\cdot)$ denotes the probability density function of the observations $\bm{Y}$ given $\bm{\psi}$ and $\bm{\Theta}$.
A more convenient rewriting of the channel model in \eqref{eq::fullchannelmodel} allows us to express the likelihood in \eqref{eq::MLIPE1} 
as
\begin{align}\label{eq::IPEloglike_1}
L(\bm{\Theta},\bm{\psi}) &= - NG\log(\pi \sigma^2) \nonumber\\
& - \frac{1}{\sigma^2}\sum_{g=1}^G \|\bm{y}^{g} - \sqrt{N_{\text{\tiny BS}}}  \bm{Q}^{g}\bm{\alpha}\|^2
\end{align}
where
\begin{equation}
\bm{Q}^{g} = \begin{bmatrix} (\bm{z}^g[0])^{\text T}\bm{D}[0] \\ \vdots \\ (\bm{z}^g[N-1])^{\text T}\bm{D}[N-1]
\end{bmatrix}  \in \mathbb{C}^{N \times (K+1)}  
\end{equation}
with $\bm{y}^{g} = [y^g[0] \, y^g[1] \, \cdots \, y^g[N-1]]^\text{T}$ the $g$-th column of the observation matrix $\bm{Y}$, and
\begin{equation}
\bm{D}[n] =
\begin{bmatrix}
\e^{\frac{-j2\pi n \tau_0}{NT_S}}\bm{a}^{*}_{\text{\tiny BS}}(\theta_0) & \cdots & \e^{\frac{-j2\pi n \tau_K}{NT_S}}\bm{a}^{*}_{\text{\tiny BS}}(\theta_K)
\end{bmatrix}.
\end{equation}
It is easy to observe that the noise variance can be estimated as $\hat{\sigma}^2 = \sum_{g=1}^G \|\bm{y}^{g} - \sqrt{ N_{\text{\tiny BS}}} \bm{Q}^{g}\bm{\alpha}\|^2 / (NG)$, leading to the compressed likelihood
\begin{align}
\label{eq::IPEloglike_2}
L_K(\bm{\Theta},\bm{\alpha}) = \sum_{g=1}^G \|\bm{y}^{g} - \sqrt{N_{\text{BS}}} \bm{Q}^{g}\bm{\alpha}\|^2
\end{align}
where $L_K(\bm{\Theta},\bm{\alpha})$ is the compressed negative log-likelihood function in presence of $K$ NLOS paths. 
Eq. \eqref{eq::IPEloglike_2} can be optimized with respect to the entire vector $\bm{\alpha} \in \mathbb{C}^{(K+1) \times 1}$, yielding 
$\hat{\bm{\alpha}} = {1}/{\sqrt{N_{\text{BS}}}} \bm{Q}^{-1}\sum_{g=1}^G(\bm{Q}^{g})^{\text{H}}\bm{y}^{g}$
where $\bm{Q} = \sum_{g=1}^G (\bm{Q}^g)^{\text{H}}\bm{Q}^g 
$. 
Substituting these minimizing values back in \eqref{eq::IPEloglike_2} leads to 
\begin{equation}\label{eq::IPEloglike_final}
L_K(\bm{\Theta}) = \sum_{g=1}^G \|\bm{y}^{g} -  \bm{Q}^{g}(\bm{\Theta})\hat{\bm{\alpha}}(\bm{\Theta})\|^2
\end{equation}
and, accordingly, the final joint ML estimator is given by
\begin{equation}\label{eq::finalJML}
\hat{\bm{\Theta}} =  \mathrm{arg} \min_{\bm{\Theta}}  L_K(\bm{\Theta}).
\end{equation}
The cost function  \eqref{eq::IPEloglike_final} is highly non-linear in the $2(K+1)$ unknown parameters and does not admit a closed-form solution or a multidimensional exhaustive search.  
Therefore, the joint ML need to be solved by resorting to iterative numerical optimization routines such as, for instance, those based on derivatives of the cost function (e.g., gradient descent or its variants \cite{NumReci}), or by employing more direct approaches such as the Nelder-Mead method \cite{Nelder}, starting from a good initial estimate. 

\subsection{Low-complexity Channel Parameter Estimation}
To solve the channel parameter estimation problem in practice, we take advantage of the sparse nature of the mmWave channel and propose a reduced-complexity suboptimal approach to obtain a good initial estimate of $\bm{\Theta}$. The main idea consists in exploiting the fact that  the received paths are almost orthogonal between each other, as discussed in Sec. \ref{sec::roleNLOS}. Under this assumption, the joint ML estimation problem can be approximated to a problem of multiple single-path estimation, where  each path can be described by the following simplified channel model
\begin{equation}\label{eq::spModel}
\tilde{\bm{h}}^\text{T}[n] = \alpha 
e^{\frac{-j2\pi n \tau}{NT_S}}\bm{a}_{\text{\tiny BS}}^\text{H}(\theta), \quad n=0,\ldots,N-1
\end{equation}
with $\alpha$, $\theta$ and $\tau$ complex amplitude, AOD and TOF of a single path, respectively. Replacing \eqref{eq::fullchannelmodel} with \eqref{eq::spModel} in the derivation of the joint ML immediately leads to the cost function of the single-path ML estimator, denoted as
 $L_0(\theta,\tau)$, whose expression can be straightforwardly obtained as a special case of \eqref{eq::IPEloglike_final} for $\bm{\Theta} = [\theta \ \tau]^\text{T}$, which is tantamount to considering $K = 0$ in the original joint ML estimation problem. 
 
 In analogy to traditional subspace-based AOA estimation, we leverage  orthogonality among the paths and interpret $L_0(\theta,\tau)$ as a kind of ``pseudospectrum'', whose minima occur in correspondence of pairs ($\theta$, $\tau$)  close to the actual channel parameters $\theta_k$ and $\tau_k$ of each $k$-th path. 
 As it will be shown in Sec. \ref{sec::simanalysis}, searching for the $K+1$ dominant minima in $L_0(\theta,\tau)$ and using these as initial estimates in 
the iterative minimization of $L_K(\bm{\Theta})$  
can efficiently solve the joint ML estimation problem and attain the theoretical performance bounds, but at the significantly reduced cost of a coarse two-dimensional search over the space $(\theta,\tau)$ instead of a prohibitive $2(K+1)$-dimensional search.



\subsection{Localization and Mapping}
From the theoretical analysis conducted in Sec. \ref{sec::roleNLOS}, it emerged that in a MISO setup, NLOS components cannot be harnessed to determine the unknown MS position. In this respect, the natural way to obtain an estimate of $\bm{p}$ is to exploit the sole LOS position-related parameters, which can be identified among the $K+1$ estimated pairs $(\hat{\theta}_k,\hat{\tau}_k)$ as the pair with the minimum value for $\hat{\tau}_k$, while the remaining pairs are used for determining the map. 

\subsubsection{Localization}
In the following, we will refer to such estimates as $(\hat{\theta}_{\text{\tiny LOS}},\hat{\tau}_{\text{\tiny LOS}})$. The unknown MS position can be then determined by solving \eqref{eq:tau0}--\eqref{eq:theta0} for $\bm{p}$: 
\begin{equation}\label{eq::estMSpos}
\hat{\bm{p}} = c\hat{\tau}_{\text{\tiny LOS}} [\cos \hat{\theta}_{{\text{\tiny LOS}}} \ \sin \hat{\theta}_{{\text{\tiny LOS}}}]^\text{T}.   
\end{equation}

\subsubsection{Mapping}
Once the estimate $\hat{\bm{p}}$ is obtained, it can be used in conjunction with each pair $(\hat{\theta}_k,\hat{\tau}_k)$, $k \geq 1$, to retrieve the related scatterer's position $\bm{s}_k$. More precisely, the direction $\hat{\theta}_k$ constrains the sought $\hat{\bm{s}}_k$ to lie on the straight line passing by the BS position and having angular coefficient $\tan(\hat{\theta}_k)$. 
Among all the possible candidate positions on that line, we select as $\hat{\bm{s}}_k$ the one satisfying the distance constraint $\hat{d}_k = c\hat{\tau}_k$ with $\hat{\bm{p}}$.
After straightforward steps, we find
(assuming for simplicity that $\hat{s}_{k,x} > 0 \;\forall k=1,\ldots,K$) that the position of the $k$-th scatterer can be estimated in closed-form as
\begin{equation}
  \begin{cases}
\hat{s}_{k,x} =\frac{1}{2} \frac{(c\hat{\tau}_k)^2 - \hat{p}^2_x - \hat{p}^2_y}{\sqrt{1 + \mathrm{tan}^2(\hat{\theta}_k)} \, c\hat{\tau}_k - \hat{p}_x - \mathrm{tan}(\hat{\theta}_k)\hat{p}_y}\\
\hat{s}_{k,y} = \mathrm{tan}(\hat{\theta}_k)\hat{s}_{k,x}
\end{cases}\label{eq::est_scatt_pos}.
\end{equation}
We observe that, in line with the theoretical findings in Sec. \ref{sec::roleNLOS}, the accuracy on the estimation of $\bm{s}_k$ depends on the quality of the NLOS parameters estimates $\hat{\theta}_k$ and $\hat{\tau}_k$, as well as on the goodness of $\hat{\bm{p}}$, which is estimated based on the LOS-only parameters $\hat{\tau}_{\text{\tiny LOS}}$ and $\hat{\theta}_{\text{\tiny LOS}}$. Given the nonlinear nature of the geometric estimator \eqref{eq::est_scatt_pos}, it is not trivial to figure out how the involved parameters  impact the  estimation of $\hat{\bm{s}}_k$.

\subsection{Equivalence of Maximum Likelihood Estimation in Channel and Position Domains}\label{sec:jointML_posdomain}

In this section, we briefly discuss an alternative formulation of the  ML estimation problem in the position domain, showing that the resulting estimator is equivalent to the one in the channel domain. 
Without loss of generality, we focus on the single-path cost function $L_0(\theta,\tau)$, but the same reasoning can be easily applied also to  $L_K(\bm{\Theta})$. More precisely, by expressing the channel parameters $\theta$ and $\tau$ as a function of their corresponding location parameters according to \eqref{eq:tau0}--\eqref{eq:theta0},  \eqref{eq::spModel} can be equivalently rewritten as
\begin{equation}\label{eq::spModel_position}
\tilde{\bm{h}}^\text{T}[n] = \alpha 
e^{\frac{-j2\pi n \|\bm{s}\|}{cNT_S}}\bm{a}_{\text{\tiny BS}}^\text{H}(\mathrm{atan2}(s_y,s_x)), \quad n=0,\ldots,N-1
\end{equation}
where $\bm{s} = \bm{g}(\theta,\tau) = c\tau [\cos \theta \ \sin \theta]^\text{T}$ and $\bm{g}(\cdot)$ is the bijective mapping (transformation from polar to Cartesian coordinates) between the channel and position parameters.
Accordingly, we denote by $L_0(\bm{s})$ the position-domain counterpart of the cost function $L_0(\theta,\tau)$. 
The equivalence between the ML formulations in both channel and position domains easily follows by observing that, given the bijective mapping $\bm{s} = \bm{g}(\theta,\tau)$, the likelihood $L_0(\theta,\tau)$ can be rewritten as a function of $\bm{s}$, i.e.,
\begin{equation}
L_0(\theta,\tau) = L_0(\bm{g}^{-1}(\bm{s})). 
\end{equation}

Analogously to $L_0(\theta,\tau)$, searching for the $K+1$ dominant minima of $L_0(\bm{s})$  provides an initial estimate of the MS and scatterers positions, which can be subsequently used to solve the joint ML problem in the position domain. In this respect, it is worth noting that since the single-path model \eqref{eq::spModel_position} is unable to capture the geometric reflections of the propagating rays, the bijective transformation $\bm{g}(\cdot)$ will map the TOF $\tau_k$ of each NLOS path into a position that falls within a distance $d_k = c\tau_k$ from the BS, along a direction identified by the AOD $\theta_k$, thus leading to a final position that does not coincide with the actual position of the $k$-th scatterer. Let us denote by $\bm{s}^e_k = [s^e_{k,x} \ s^e_{k,y}]^\text{T}$, $k \geq 1$, such 
``equivalent" positions. Then, each $\bm{s}^e_k$
can be mapped back to its corresponding position $\bm{s}_k$ by applying some  geometric considerations: first, we write the parametric expression for  the segment passing by the BS and the equivalent position $\bm{s}^e_k$, that is, $\bm{s}_k(\lambda) = \lambda \bm{p}_{\text{\tiny BS}} + (1-\lambda)\bm{s}^e_k$,  $\lambda \in [0,1]$. Then, we retain as position $\bm{s}_k$ the point on the line corresponding to the value $\lambda^*$ satisfying $\|\bm{s}^e_k - \bm{s}_k(\lambda^*)\| = \|\bm{s}_k(\lambda^*) -\bm{p}\|$, that is
\begin{equation}
\lambda^* = \frac{1}{2}\frac{\|\bm{s}^e_k - \bm{p}\|^2}{\|\bm{s}^e_k \|^2 - (p_x s^e_{k,x} + p_y s^e_{k,y})}
\end{equation}
(where without loss of generality we kept assuming that the BS is placed at the origin of the reference system). Given the  equivalence between the two estimators, in the following we present the results only for one of them; in particular, we opt for the joint ML in the channel domain, being closer to the physics of the channel hence more easily interpretable in terms of paths (i.e., angles and delays).

\section{Simulation Analysis and Results}\label{sec::simanalysis} 
In this section, we present a simulation analysis aimed at evaluating the performance of the proposed joint ML estimator, considering different values of the relevant parameters, also in comparison with the theoretical bounds derived in Sec.~\ref{sec:bounds}. To evaluate the performance, we consider the Root Mean Squared Error (RMSE) estimated on the basis of 1000 Monte Carlo independent trials.

\subsection{Simulation Setup}
The analyzed scenario consists of a single BS equipped with $N_{\text{\tiny BS}} = 20$ antennas, placed at a known position $\bm{p}_{\text{\tiny BS}}= [3 \ 0]^\text{T}$ m, while the MS is located at $\bm{p} = [10 \ 4]^\text{T}$ m. 
The localization process is carried out by exploiting  only a single broadcast signal in DL ($G = 1$) with bandwidth $B = 40$ MHz over a central frequency $f_c = 60$ GHz, using $N = 20$ different subcarriers. The simulations are carried out without assuming any a priori knowledge of the MS and scatterers positions; accordingly, we set the beamforming matrix $\bm{F}^g[n]$ to have $M = N_{\text{\tiny BS}}/2$ uniformly-spaced beams that cover the whole considered area, and keep it constant over each transmission $g$ and subcarrier $n$. If some knowledge about the environment is available, it can be incorporated in the beamforming matrix as suggested in \cite{ICASSP2020}.

We compute the channel path loss $\rho_k$ for each $k$-th path according to the geometry statistics in \cite{Li1,Li2}. For $k=0$, we retrieve the path loss experienced by the LOS path as $1/\rho_0 = \xi^2(d_0) \left({\lambda_c}/{(4 \pi d_0)} \right)^2 $, 
with $\xi^2(d_0)$ modeling the effects of the atmospheric attenuation at $d_0$, while the remaining factor is the well-known free space loss at a distance $d_0$. According to the experimental campaigns discussed in \cite{rappaport_5G}, we set $\xi^2(d_0)$ to $16$ dB/Km. The  channel gain is computed as $h_0 = a_0 \e^{j \varphi_0}$ with $a_0 = \sqrt{P_t}$ the amplitude related to the transmitted power $P_t$ and $\varphi_0$ the corresponding phase.

Consistently with the typical mmWave channel characteristics, we assume that each NLOS path is generated from a single dominant reflector \cite{Li2}; therefore, the path loss $\rho_k$ ($k \geq 1$) along the $k$-th NLOS link is evaluated according to $1/\rho_k = \omega\Omega(d_k) \left({\lambda_c}/{(4 \pi d_k)} \right)^2$, 
with $d_k$ total length of the path and $\Omega(d_k) = (\gamma_r d_k)^2\e^{-\gamma_r d_k}$ the Poisson distribution with density $\gamma_r$ modeling the geometry of the environment. Following the specifications in \cite{Li2}, we set $\gamma_r = 1/7$; as to $\omega$, it models the first-order reflection effects and depends on the specific characteristics of the propagation environment. To assess the algorithm performance under different operating conditions, we set $\omega$ in order to obtain a varying range of power for each NLOS path, the latter expressed in terms of LOS-to-multipath ratio (LMR) $\text{LMR}_k = {P_{\text{\tiny LOS}}}/{ P^k_{\text{\tiny NLOS}}} = {\rho_k}/{\rho_0}$.   
This indicator  reflects the theoretical insights provided by the CRLB analysis in Sec.~\ref{sec:bounds}: indeed, from the diagonal elements of \eqref{eq::approx_CRLBp}, we observed that the lower bounds on the estimation of each scatterer position $\bm{s}_k$ depend only on the parameters of the associated $k$-th NLOS path and on the parameters of the LOS path, hence the ratio between their powers represents a meaningful parameter to discriminate between favorable (i.e., stronger LOS) and unfavorable (i.e., weaker LOS) conditions.
Accordingly, the total LMR is given by
$\text{LMR} = {P_{\text{\tiny LOS}}}/{\sum_{k=1}^K P^k_{\text{\tiny NLOS}}} = {1/\rho_0}/{\sum_{k=1}^K 1/\rho_k}$. 
The transmit power $P_t$ adopted by the BS is varied (from about $0.1$ mW up to about $10$ mW) in order to obtain different ranges of SNR, defined as $\text{SNR} \eqdef 10\log_{10}\left({P_t}/{(\rho_0 N_0B)}\right)$, 
where 
$N_0$ is the noise power spectral density. 

\begin{figure}
\centering
 \includegraphics[width=0.55\textwidth]{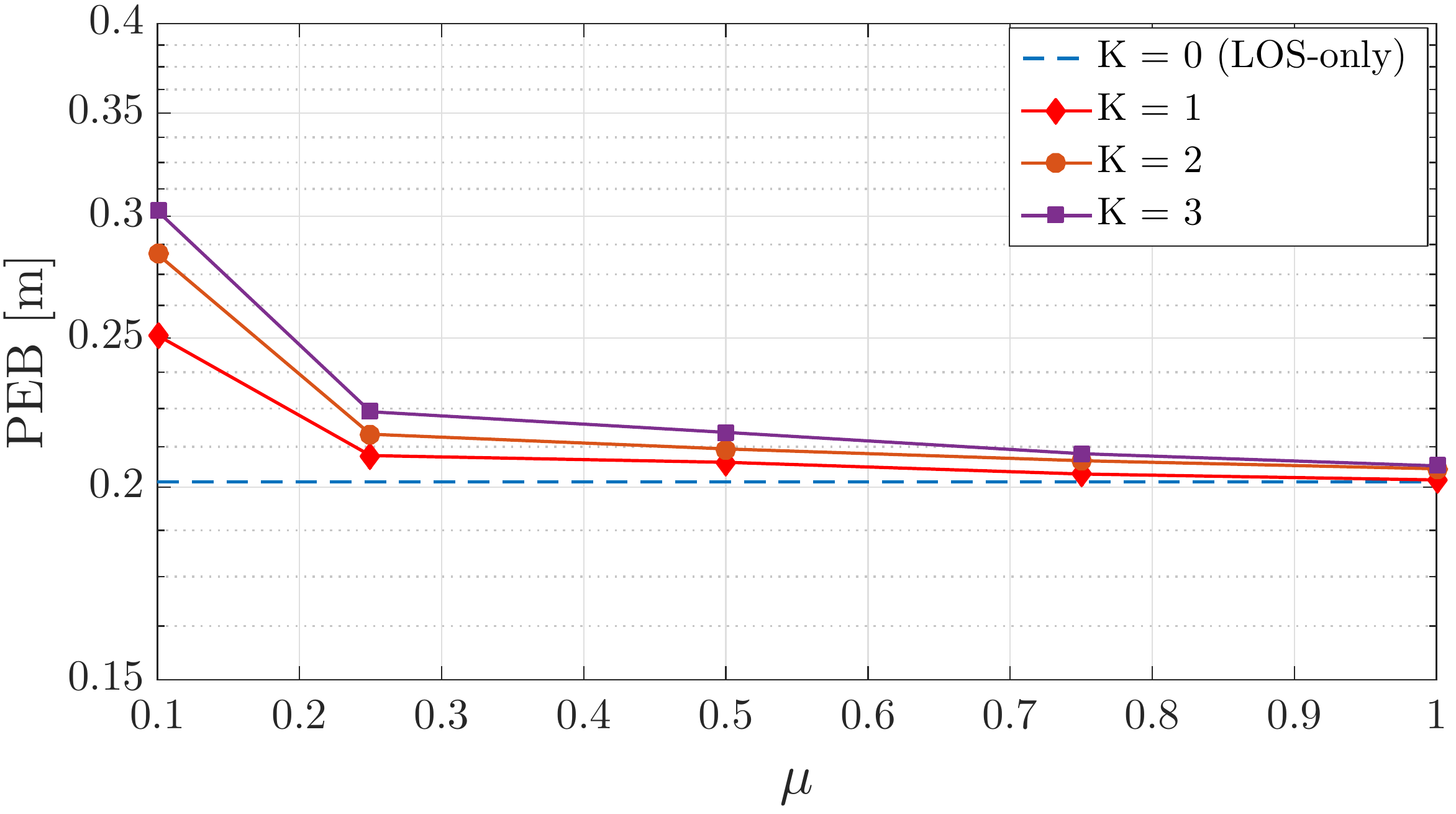}
 	\caption{PEB on the estimation of $\bm{p}$ as a function of the degree of separation among LOS and NLOS paths $\mu$, for a varying number of NLOS paths.}
\label{fig:PEB}
 \end{figure}

\subsection{Results and Discussion}
\subsubsection{Analysis of the theoretical bounds}we start the analysis by providing a numerical interpretation of the relevant CRLBs derived using the FIM analysis in Sec. \ref{sec:bounds}. In Fig.~\ref{fig:PEB}, we investigate the achievable theoretical accuracy  in the estimation of the desired MS position $\bm{p}$ in presence of a number of NLOS paths $K$ varying from a minimum of zero (i.e., LOS-only scenario) up to a maximum of three\footnote{Recent measurement campaigns conducted at mmWave frequencies have shown that, due to severe path-loss and frequent blockages, a realistic channel typically consists of a very small number of dominating paths \cite{Rappaport}.}. To reproduce different geometric configurations of the environment, we fix three reference directions from the MS to the scatterers to $-20^{\circ}$, $50^{{\circ}}$ and $70^{{\circ}}$, respectively, and vary each position $\bm{s}_k$ $(k \geq 1$) along its corresponding direction in order to obtain a distance between the MS and the $k$-th scatterer equal to $d_{k,2} = \ell_{k}\mu$, with $\mu \in (0,1]$ a scaling parameter introduced to increase or decrease the degree of separation (in terms of both TOF and AOD) among the LOS path and the NLOS paths. The three reference distances are set to $\ell_{1} = 20$ m, $\ell_{2} = 28$ m, and $\ell_{3} = 36$ m, so that the resulting scenario is compatible with the expected coverage in mmWave 5G systems \cite{Koivisto}. 
In agreement with the theoretical findings in Sec. \ref{sec::roleNLOS}, we observe that the PEB does not experience significant changes as $K$ increases, confirming that the estimation of $\bm{p}$ is not harmed by the presence of the additional NLOS paths at the receive side. This behavior also confirms the orthogonality among the different paths, since the NLOS paths start to have a noticeable effect only for very small values of $\mu$.
Furthermore, the very slight differences among the PEB curves reveal that the residual reciprocal interference among the paths is mainly linked to the overall multipath power and it is otherwise independent from the effective number of NLOS paths $K$. Therefore, to ease the presentation and without loss of generality, in the following we stick to the case of a single NLOS path (i.e., $K = 1$) and evaluate the proposed algorithm performance for different values of the multipath power. Finally, notice that the values assumed by the PEB demonstrate that cm-level localization accuracy can be  achieved in the considered mmWave MISO setup, in spite of the fact that the receiver can only exploit a single antenna to cope with multipath propagation.

 \begin{figure}
     \centering
      \includegraphics[width=0.55\textwidth]{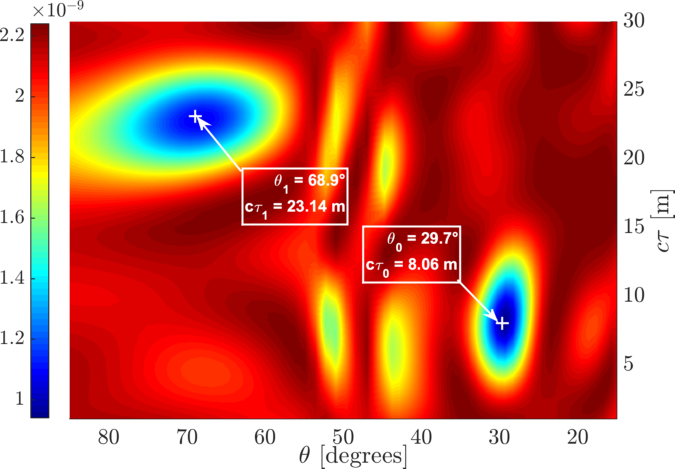}
 	\caption{Possible evaluation of the cost function $L_0(\theta,\tau)$ for a scenario with $K=1$ NLOS path.} 
\label{fig:costSPML}
 \end{figure}

\subsubsection{Comparison between channel domain and position domain estimation}the approach proposed in Sec.~\ref{sec:jointML_channel} originates from the idea that a first initial estimate of the unknown vector $\bm{\Theta}$ can be obtained by searching for the $K+1$ dominant minima in the single-path cost function $L_0(\theta,\tau)$. To validate such an intuition, in Fig.~\ref{fig:costSPML} we report a graphical example of a possible evaluation of $L_0(\theta,\tau)$ over a discrete two-dimensional $64 \times 64$ grid\footnote{Notice that we are considering a fine grid (with resolution 0.2 m in range and 0.5 degrees in angle) only for the sake of visualization; for practical implementation, this is not necessary since, as shown later, a very coarse grid (with resolution 2 m in range and 4 degrees in angle) is sufficient to achieve best performance while keeping low the complexity.} of ($\theta,\tau)$ pairs. The simulation is conducted assuming a single scatterer placed at $\bm{s}_1 = [8 \ 13]^\text{T}$ m, which generates a reflected NLOS path having a power 3 dB less than the LOS, at $\text{SNR}=5$ dB.
As expected, the cost function is highly non-linear and exhibits several local minima. However, the two dominant minima are in the neighborhood of the actual $(\theta_k,\tau_k)$ pairs, $k=0,1$ (indicated by crosses), meaning that $L_0(\theta,\tau)$ is able to capture the angular and time ``signature" of each individual path, although with an accuracy that worsens for lower SNR and less separable paths. This is remarkable since  $L_0(\theta,\tau)$ is a suboptimal function that ignores the presence of more than one path in the received signal $\bm{Y}$.

For the sake of comparison, in Fig.~\ref{fig:costSPML_position} we report the evaluation of the position-domain cost function $L_0(\bm{s})$ (see Sec.~\ref{sec:jointML_posdomain}) over a discrete grid of  $(s_x,s_y)$ pairs obtained from the previous $64 \times 64$ grid (in the channel domain) by applying a Cartesian transformation to each  $(\theta,\tau)$ pair  (i.e.,  $[s_x \ s_y]^\text{T} = c\tau [\cos \ \theta \sin \theta]^\text{T}$ pair), while the remaining parameters are set as in Fig.~\ref{fig:costSPML}.
\begin{figure}
     \centering
      \includegraphics[width=0.55\textwidth]{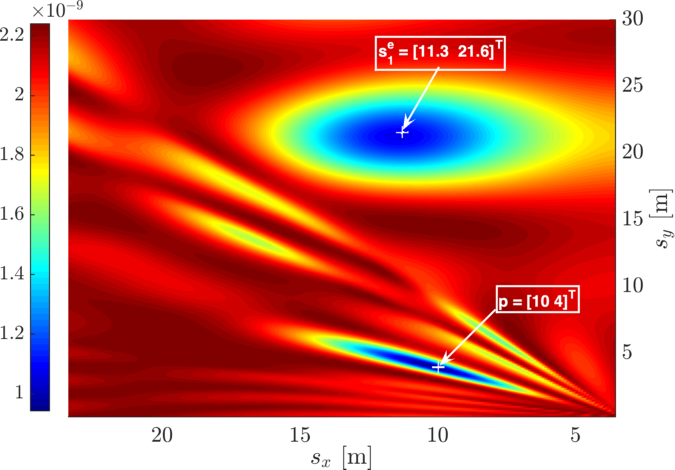}
 	\caption{Possible evaluation of the cost function $L_0(\bm{s})$ for a scenario with $K=1$ NLOS path.} 
\label{fig:costSPML_position}
 \end{figure}
We observe that, also in the location domain, there are two dominant minima that clearly emerge in evaluating the cost function $L_0(\bm{s})$. In line with the discussion in Sec.~\ref{sec:jointML_posdomain}, the bottom-most minimum occurs in the neighborhood of the actual MS position $\bm{p}$, while the other one (linked to the NLOS parameters) is located in the vicinity of the equivalent scatterer position $\bm{s}^e_1 = [11.3 \ 21.6]^\text{T}$ m.\footnote{It is worth noting that a naive search of the first $K+1$ minima in any single-path cost function would likely produce erroneous estimates of the sought channel or position parameters, respectively: in fact, since each dominant minimum is quite spread (blue areas in both figures), the search would likely lead to incorrectly selecting multiple local minima belonging to the neighborhood of the same dominant minimum. To overcome such a drawback, one can resort to the well-known space-alternating generalized expectation-maximization (SAGE) method, which sequentially estimate each  $(\theta_k,\tau_k)$ pair and compensate its contribution before searching for the next dominant minimum (i.e., the next  $(\theta_k,\tau_k)$ pair) in the cost function. This approach, theoretically introduced in \cite{Sage}, has been extensively applied for parameter extraction from extensive channel measurement data \cite{SageApp1,SageApp2}.}

\begin{figure}
    \centering
    \begin{subfigure}[t]{0.5\textwidth}
        \centering
        \includegraphics[width=\textwidth]{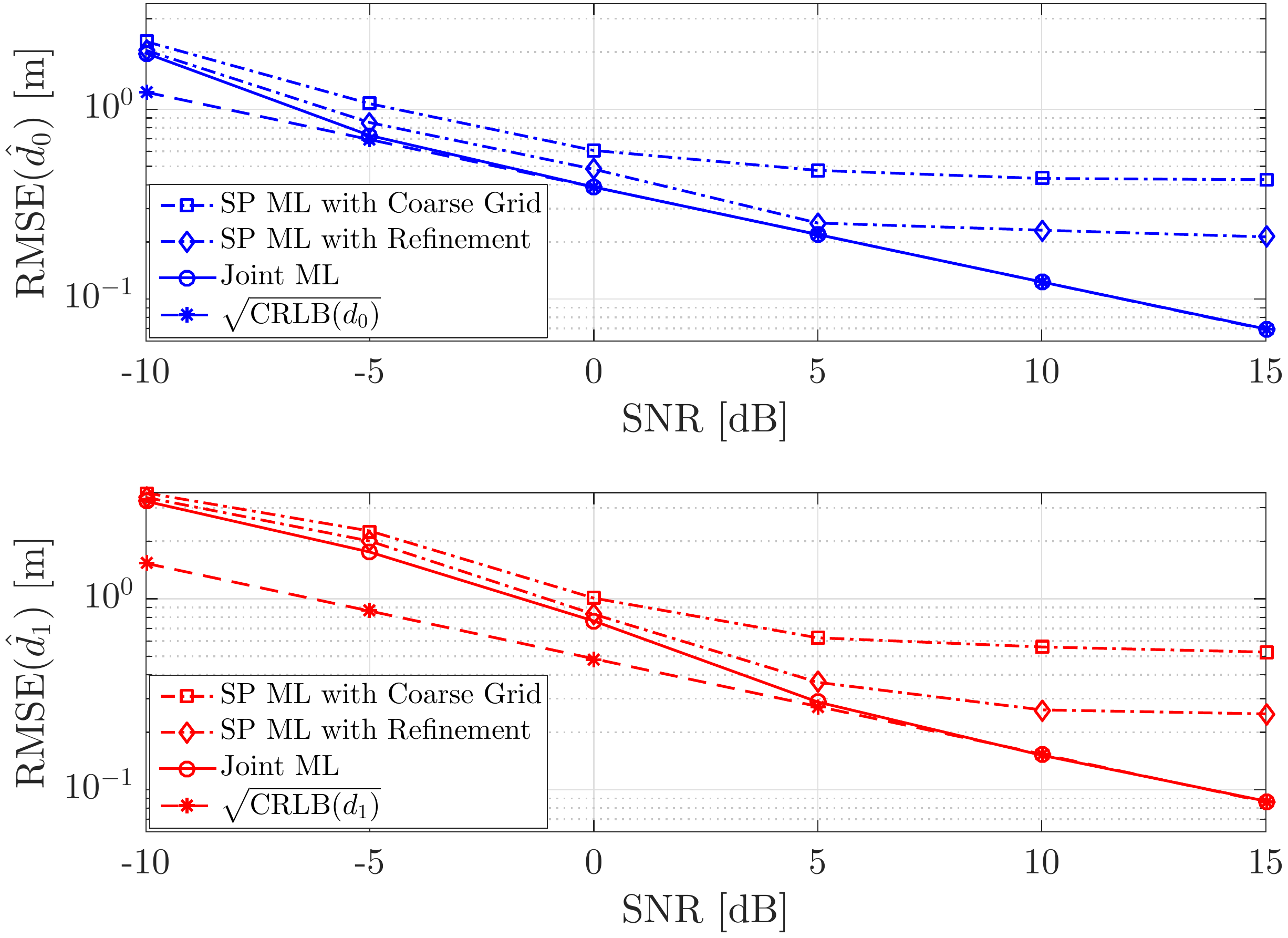}
        \caption{RMSE on the estimation of $d_k$.\label{fig::LMR5_dist}}
    \end{subfigure}%
    ~ 
    \begin{subfigure}[t]{0.5\textwidth}
        \centering
        \includegraphics[width=\textwidth]{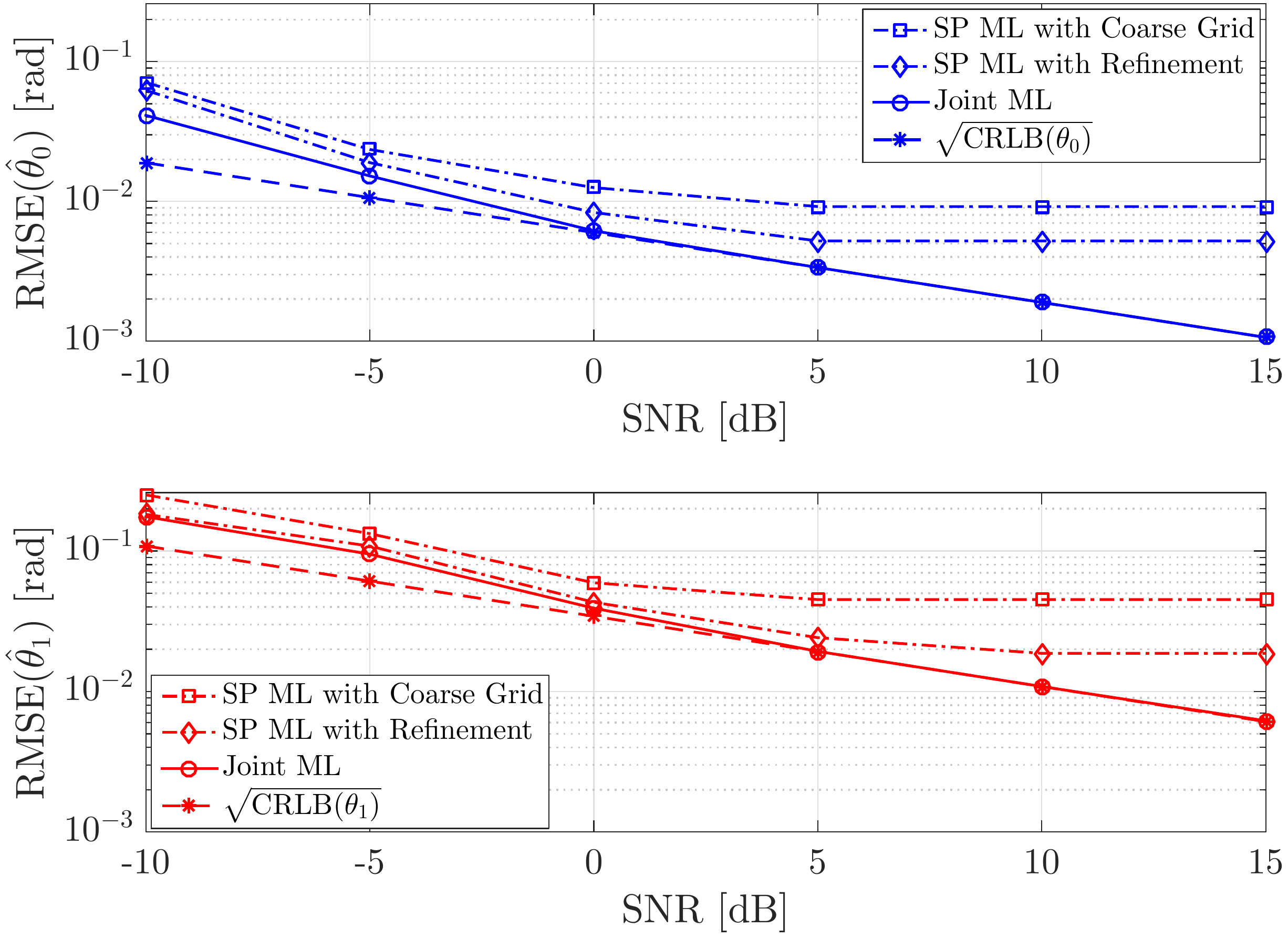}
        \caption{RMSE on the estimation of $\theta_k$.\label{fig::LMR5_AOD}}
    \end{subfigure}
    \caption{RMSEs on $d_k = c\tau_k$ and $\theta_k$ estimation in comparison with the CRLBs as a function of the SNR, for a $\text{LMR} = 5$ dB. \label{fig::LMR5}}
\end{figure}
 
\subsubsection{Performance assessment for LOS stronger than NLOS}we start the performance assessment by considering the more typical case in which the LOS path is received with a power greater than the NLOS, that is, we assume the presence of a single scatterer  at $\bm{s}_1 = [8 \ 13]^\text{T}$ m which produces a $\text{LMR} = 5$ dB. Fig.~\ref{fig::LMR5} reports the RMSEs on the estimation of the channel parameters $d_k$ and $\theta_k$, $k=0,1$, as a function of the SNR. The proposed estimator is labeled as 
``Joint ML'' and it is implemented in two-steps: in the first one, 
an initial estimate of $\bm{\Theta}$ is obtained by searching for the $K+1$ dominant minima in $L_0(\theta,\tau)$ over a coarse $8 \times 8$ grid built from pairs $(\theta_k,\tau_k)$; the estimated vector $\hat{\bm{\Theta}}$ is then used to initialize a Nelder-Mead procedure which iteratively solves the $(K+1)$-dimensional ML estimation problem in \eqref{eq::finalJML}. For the sake of comparison, we also report the performance of the algorithms that approach the estimation problem by assuming a simplified single-path (SP) model. More precisely, we label as 
``SP ML with Coarse Grid'' the algorithm that simply 
optimizes $L_0(\theta,\tau)$ over the coarse $8 \times 8$ grid to estimate $\bm{\Theta}$ (i.e., the first step of the proposed Joint ML approach). The SP estimation performance can be further improved by using each estimated  $(\hat{\theta}_k,\hat{\tau}_k)$ pair in $\hat{\bm{\Theta}}$ to initialize a Nelder-Mead procedure that numerically optimize the SP cost function $L_0(\theta,\tau)$, yielding a refined estimate of $\bm{\Theta}$; in the following, we label such an approach as 
``SP ML with Refinement''. As concerns the theoretical lower bounds, each $\sqrt{\mathrm{CRLB}(\cdot)}$ is obtained  by inverting the FIM in either channel (ref. eq.(\ref{eq::FIMchannel})) or location (ref. eq.~(\ref{eq::FIMpos})) domain, selecting the corresponding diagonal entries and taking the square root.

By comparing the RMSEs in Fig.~\ref{fig::LMR5}, we observe that  the LOS channel parameters are estimated  more accurately than the  NLOS ones (as also reflected in the corresponding bounds), due to the stronger power of the former compared to latter. The "SP ML with Coarse Grid" algorithm (dash-dot curves with square markers) provides satisfactory initial estimates of both AODs and TOFs parameters, with an accuracy that increases with the SNR (since the powers of the LOS and NLOS paths increase accordingly) and with a reduced complexity thanks to the coarse grid used in the estimation process.
Although the performance further improves when a subsequent iterative 2D refinement is applied (see dash-dot curves with diamond markers), both the SP algorithms are still unable to achieve the theoretical lower bounds, as confirmed by the position errors reported in Fig.~\ref{fig:LMR5_position}. The existing gap clearly demonstrates that the algorithms derived assuming a simplified SP model cannot effectively cope with the residual mutual interference among the received paths. On the other hand, the solid curves show that the proposed Joint ML estimator offers the best performance: indeed, the RMSE of $\hat{\bm{p}}$ approaches the bound already for $\text{SNR} = -5$ dB, while the mapping of the scatterer position becomes increasingly more accurate until reaching the bound for $\text{SNR} = 5$ dB.

\begin{figure}
\centering
 \includegraphics[width=0.5\textwidth]{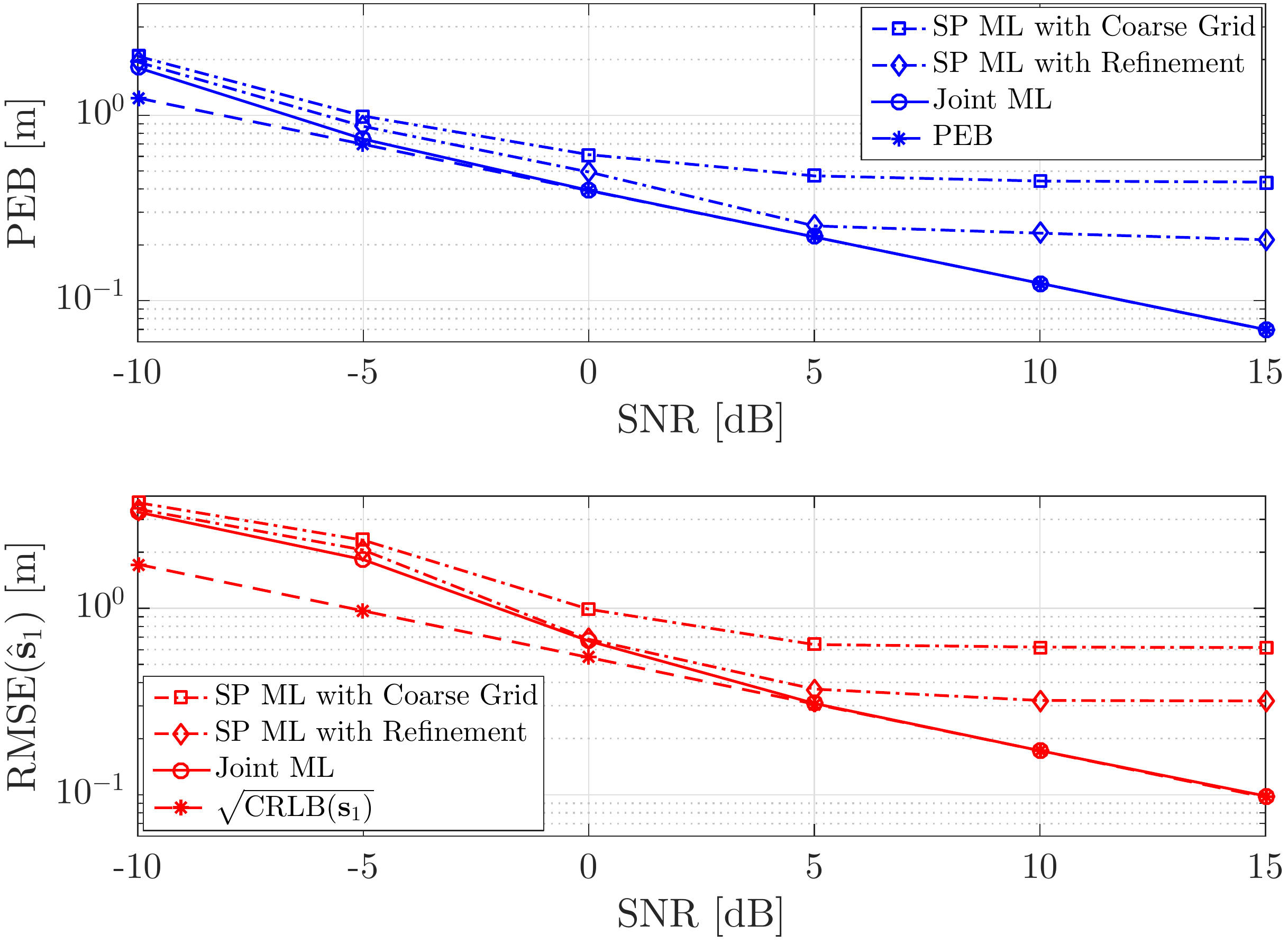}
 	\caption{RMSEs on MS and scatterer position estimation versus CRLBs as a function of the $\text{SNR}$, for $\text{LMR} = 5$ dB.}
\label{fig:LMR5_position}
 \end{figure}
 
\subsubsection{Performance assessment for LOS weaker than NLOS}to challenge the proposed Joint ML estimator, we consider the case in which the power of the NLOS path is $5$ dB higher than that of the LOS, that is, we set $\text{LMR} = -5$ dB. This setup is representative of scenarios in which the LOS path is severely attenuated. The RMSEs of $\hat{\bm{p}}$ and $\hat{\bm{s}}_1$ are reported in Fig.~\ref{fig:LMR-5_position}:  in this case, the higher power in the NLOS path translates into more advantageous conditions for mapping the scatterer position, which in fact  is more accurately estimated compared to the MS position, as confirmed by the smaller values of the bounds (dashed curves).
\begin{figure}
\centering
 \includegraphics[width=0.5\textwidth]{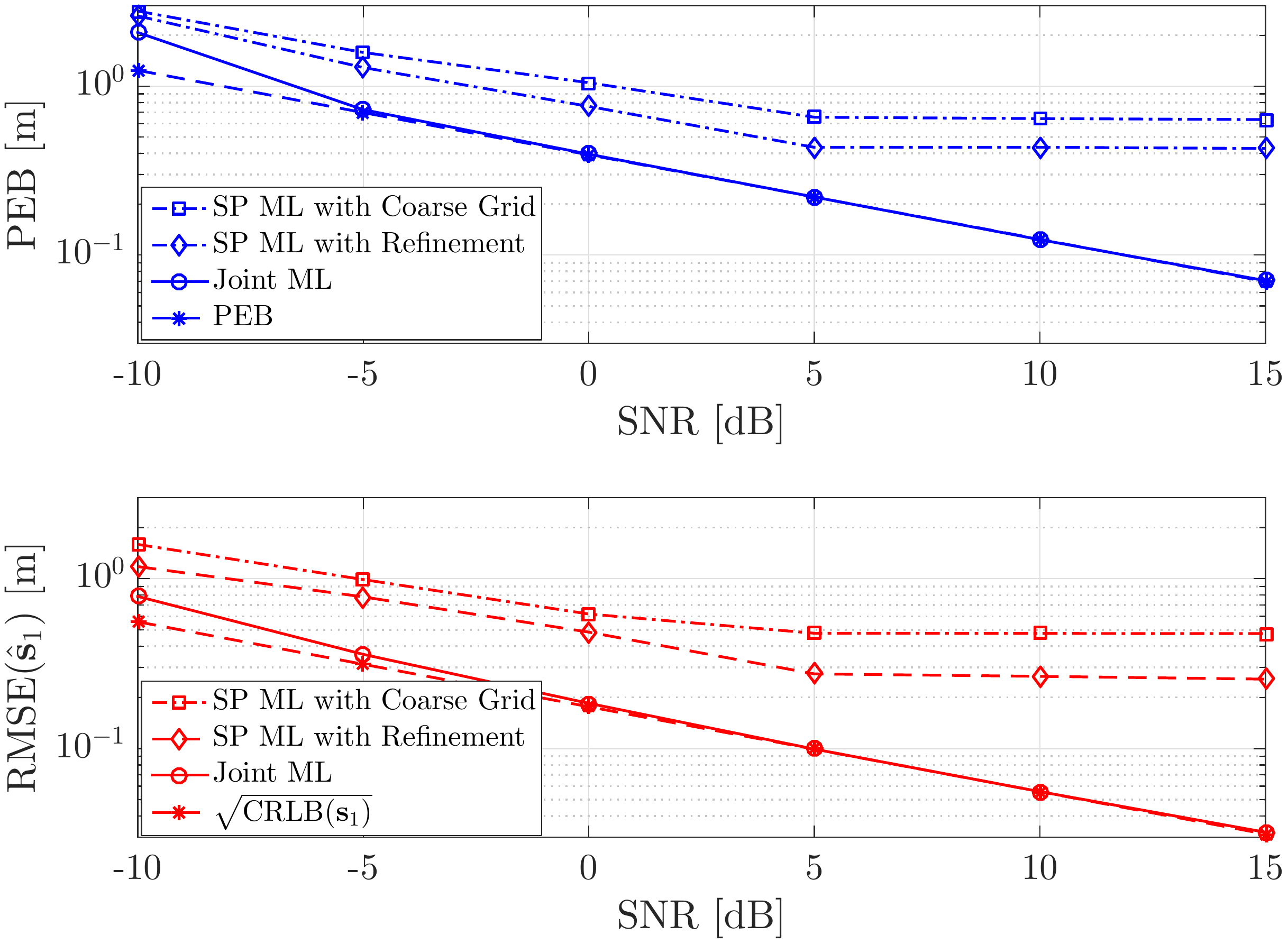}
 	\caption{RMSEs on MS and scatterer position estimation versus CRLBs as a function of the $\text{SNR}$, for $\text{LMR} = -5$ dB.}
\label{fig:LMR-5_position}
 \end{figure}
As it can be observed,  in this case the performances of the SP algorithms significantly deviate from the theoretical bounds. Remarkably, the proposed Joint ML estimator performs well even when the LOS path is highly attenuated, providing a very accurate  localization of the MS and mapping of the scatterer already at about 0 dB SNR.


\subsubsection{Performance assessment as a function of the multipath power}to corroborate the above results, we further analyze the algorithms behavior assuming a fixed value of the SNR and varying the multipath power in terms of LMR between $-10$ dB and $10$ dB, so as to obtain performance representative of a number of different operational conditions. In Fig.~\ref{fig::SNR10_position}, we show the RMSEs on the estimation of $\bm{p}$ and $\bm{s}_1$ as a function of the LMR, for a $\text{SNR} = 10$ dB. In agreement with the theoretical findings in Sec.~\ref{sec::roleNLOS} as well as with the analysis reported in Fig.~\ref{fig:PEB}, the PEB remains practically constant as the LMR changes, thus confirming the very weak dependency of LOS on the NLOS path, that is, the estimation of $\bm{p}$ is not harmed by the presence of multipath propagation. On the other hand, the dashed (red) curve shows that the accuracy achievable in the estimation of $\bm{s}_1$ progressively worsens as the power of the multipath diminishes. As it can be noticed, the performances of the SP algorithms are in trade-off: indeed, the RMSEs on the estimation of $\bm{p}$ tend to decrease as the LMR increases; conversely, the RMSEs of $\hat{\bm{s}}_1$ experiences an evident increase as the power of the NLOS path drops. Again, this behavior confirms that the performances are better for the more powerful path. Interestingly, the proposed Joint ML approach, thanks to its optimality, is able to cope with the less accurate scatterer position estimate for high LMRs, and vice versa with the less accurate MS position estimate for low LMRs. Indeed, the solid curves show that the joint estimator enables a satisfactory instantaneous localization and mapping in all the different operating conditions, significantly outperforming the SP competitors and attaining the bounds for even moderate values of the SNR.

\begin{figure}
\centering
 \includegraphics[width=0.5\textwidth]{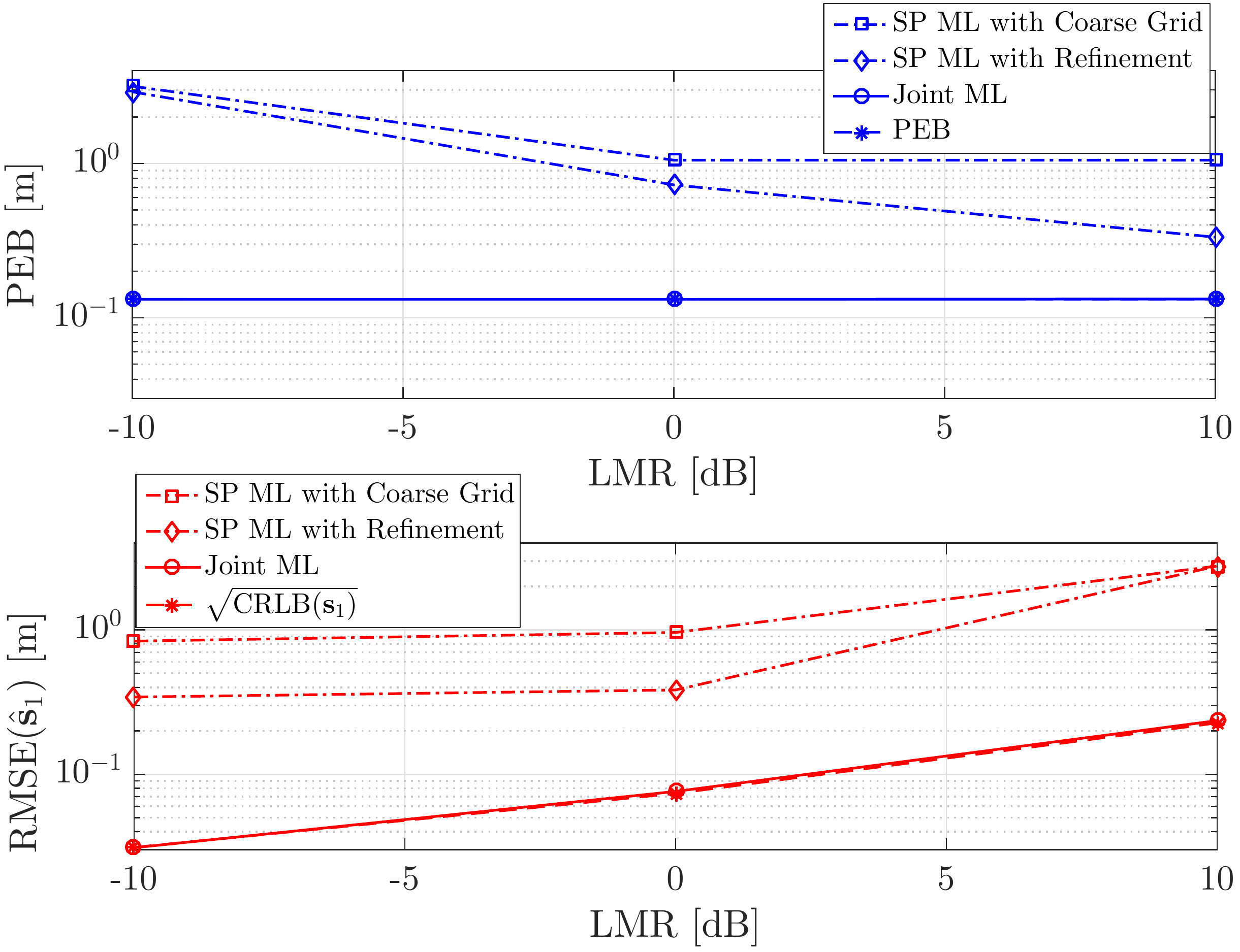}
 	\caption{RMSEs on MS and scatterer position estimation versus CRLBs as a function of the LMR, for $\text{SNR} = 10$ dB.}
\label{fig::SNR10_position}
 \end{figure}

\section{Conclusion}
The problem of single-snapshot estimation of the unknown MS position and mapping of scatterers locations in a mmWave MISO system has been addressed. The localization  process is based  on the combined use of AOD and TOF information, which can be estimated from a single pilot signal broadcast in DL by a BS. The Fisher information analysis demonstrated that localization and mapping is still possible also when using a single-antenna receiver but, differently from the MIMO setup, NLOS information cannot be used to improve the estimation of the MS position. We formulated the joint ML estimation problem in the channel domain and proposed and evaluated a low-complexity initialization method, which has an equivalent formulation in the position domain.  

%

\appendices

\allowdisplaybreaks
\section{Derivation of  FIM elements in \eqref{eq::FIMblock} }\label{appA}
In the following, we provide the exact expressions of the entries of the FIM matrix in \eqref{eq::FIMblock}, derived based on \eqref{eq::FIMelemformula}. 
We introduce $\kappa_n=2\pi n /(NT_S)$, $\beta_{h,\ell}=\frac{2N_{\text{\tiny BS}}}{\sigma^2}\alpha^*_h \alpha_\ell \exp({{j \kappa_n (\tau_h - \tau_\ell)}})$ and $\bm{A}_{h,\ell}=\bm{a}_{\text{\tiny BS}}(\theta_h)\bm{a}^\text{H}_{\text{\tiny BS}}(\theta_\ell)$.
We start from the elements linked to the position-related  parameters $\theta_k$ and $\tau_k$, which are given by
\begin{align}
\Lambda(\tau_h,\tau_\ell) & =  \sum_{g,n}\Re\left\{\beta_{h,\ell} \kappa^2_n
(\bm{z}^g[n])^\text{H}\bm{A}_{h,\ell}\bm{z}^g[n] \right\}, \nonumber\\
\Lambda(\theta_h,\theta_\ell) & =  \sum_{g,n}\Re\left\{\beta_{h,\ell}(\bm{z}^g[n])^\text{H} 
\bm{D}^\text{H}_h\bm{A}_{h,\ell}\bm{D}_\ell\bm{z}^g[n] 
\right\}, \nonumber\\
\Lambda(\tau_h,\theta_\ell) & =  \sum_{g,n}\Re\left\{j  \kappa_n \beta_{h,\ell} (\bm{z}^g[n])^H \bm{A}_{h,\ell}\bm{D}_\ell\bm{z}^g[n] \right\}, \nonumber
\end{align}
%
where the matrix $\bm{D}_u$ with subscript $u$ replaced by either $h$ or $\ell$ is given by
$$
\bm{D}_u = -j\frac{2\pi}{\lambda_c}d \cos \theta_u \mathrm{diag}[0\ 1\ \cdots\ (N_{\text{\tiny BS}}-1)]
$$
with $\mathrm{diag}(\point)$ a function which constructs a diagonal matrix with its entries. 

The elements including the channel amplitudes $r_k$ and phases $\phi_k$ are obtained as
\begin{align}
\Lambda(\tau_h,r_\ell) & =  \sum_{g,n}\Re\{j\e^{j\phi_\ell} \frac{\beta_{h,\ell}}{\alpha_{\ell}} \kappa_n  (\bm{z}^g[n])^H \bm{A}_{h,\ell}\bm{z}^g[n]\}, \nonumber\\
\Lambda(\tau_h,\phi_\ell) & =  \sum_{g,n}\Re\{-\beta_{h,\ell} \kappa_n  (\bm{z}^g[n])^H \bm{A}_{h,\ell}\bm{z}^g[n] \},\nonumber \\
\Lambda(\theta_h,r_\ell) & =  \sum_{g,n}\Re\{j\e^{j\phi_\ell} \frac{\beta_{h,\ell}}{\alpha_{\ell}}(\bm{z}^g[n])^H  \bm{D}^H_h \bm{A}_{h,\ell}\bm{z}^g[n] \}, \nonumber \\
\Lambda(\theta_h,\phi_\ell) & =  \sum_{g,n}\Re\{j\beta_{h,\ell}(\bm{z}^g[n])^H  \bm{D}^H_h\bm{A}_{h,\ell}\bm{z}^g[n] \},\nonumber \\
\Lambda(r_h,r_\ell) &=    \sum_{g,n}\Re\{\frac{\beta_{h,\ell}}{\alpha^*_h\alpha_\ell}\e^{j(\phi_\ell- \phi_h)}(\bm{z}^g[n])^H\bm{A}_{h,\ell}\bm{z}^g[n] \}, \nonumber \\
\Lambda(\phi_h,\phi_\ell) &=  \sum_{g,n} \Re\{\beta_{h,\ell}  (\bm{z}^g[n])^H\bm{A}_{h,\ell}\bm{z}^g[n] \}, \nonumber \\
\Lambda(r_h,\phi_\ell) &=   \sum_{g,n} \Re\{j\frac{\beta_{h,\ell}}{\alpha^*_h}\e^{-j\phi_h} (\bm{z}^g[n])^H\bm{A}_{h,\ell}\bm{z}^g[n] \}. \nonumber
\end{align}




\ifCLASSOPTIONcaptionsoff
  \newpage
\fi



%
\balance
\bibliographystyle{IEEEbib}
\bibliography{refs}
\end{document}